\documentclass[a4paper,fleqn,11pt]{article}

\usepackage{amsmath}
\usepackage{amsthm}
\usepackage{amssymb}
\usepackage{accents}

\usepackage[a4paper,top=3cm, bottom=3cm, left=3cm, right=3cm]{geometry}
\usepackage[shortlabels,inline]{enumitem}
\usepackage[pdftex,ocgcolorlinks,pagebackref=false]{hyperref}
\usepackage[affil-it]{authblk}

\setlist[enumerate,1]{label={(\roman*)}}

\bibliographystyle{alphaurl}

\theoremstyle{plain}
\newtheorem{theorem}{Theorem}[section]
\newtheorem{lemma}[theorem]{Lemma}
\newtheorem{proposition}[theorem]{Proposition}
\newtheorem{corollary}[theorem]{Corollary}

\theoremstyle{definition}

\newtheorem{definition}[theorem]{Definition}

\newcommand{\ed}{\mathop{}\!\mathrm{d}}

\newcommand{\norm}[2][]{\left\|#2\right\|_{#1}}
\newcommand{\ket}[1]{\left|#1\right\rangle}

\newcommand{\ketbra}[2]{\left|#1\middle\rangle\!\middle\langle#2\right|}

\newcommand{\setbuild}[2]{\left\{#1\middle|#2\right\}}
\DeclareMathOperator{\distributions}{\mathcal{P}}
\DeclareMathOperator{\boundeds}{\mathcal{B}}
\newcommand{\positivedefiniteoperators}[1]{\boundeds(#1)_{++}}
\DeclareMathOperator{\states}{\mathcal{S}}

\DeclareMathOperator{\Tr}{Tr}
\DeclareMathOperator{\Hom}{Hom}

\DeclareMathOperator{\spectrum}{spec}
\DeclareMathOperator{\id}{id}

\newcommand{\maxrelativeentropy}[3][]{D^{#1}_{\textnormal{max}}\mathopen{}\left(#2\middle\|#3\right)\mathclose{}}
\newcommand{\maxdistance}{d_{\textnormal{T}}}

\newcommand{\reals}{\mathbb{R}}
\newcommand{\complexes}{\mathbb{C}}
\newcommand{\naturals}{\mathbb{N}}
\newcommand{\nonnegativereals}{\mathbb{R}_{\ge 0}}
\newcommand{\positivereals}{\mathbb{R}_{>0}}
\newcommand{\positiveintegers}{\mathbb{N}_{>0}}

\newcommand{\preorderle}{\preccurlyeq}
\newcommand{\preorderge}{\succcurlyeq}

\newcommand{\asymptoticge}{\succsim}

\newcommand{\catalyticge}{\preorderge_{\textnormal{c}}}

\newcommand{\familysemiring}[2]{S_{#1,#2}}
\newcommand{\cfamilysemiring}[2]{S^{\textnormal{c}}_{#1,#2}}
\newcommand{\testspectrum}{\mathrm{TSper_1}}

\newcommand{\tropicals}{\mathbb{TR}}

\newcommand{\thesandwicheddivergence}{\widetilde{D}}
\newcommand{\sandwicheddivergence}[3]{\thesandwicheddivergence_{#1}\left(#2\middle\|#3\right)}
\newcommand{\sandwichedquasientropy}[3]{\widetilde{Q}_{#1}\left(#2\middle\|#3\right)}
\newcommand{\sandwichedmeandivergence}[3]{\thesandwicheddivergence_{#1}\left(#2\middle\|#3\right)}

\DeclareMathOperator{\gmeans}{\mathcal{G}}
\DeclareMathOperator{\Bohrcpt}{b}
\DeclareMathOperator{\poly}{\textnormal{poly}}

\title{Equivariant relative submajorization}
\author[1,2]{Gergely Bunth}
\author[1,2]{P\'eter Vrana}
\affil[1]{Institute of Mathematics, Budapest University of Technology and Economics, Egry~J\'ozsef u.~1., Budapest, 1111 Hungary.}
\affil[2]{MTA-BME Lend\"ulet Quantum Information Theory Research Group}

\begin{document}
\maketitle
\begin{abstract}
We study a generalization of relative submajorization that compares pairs of positive operators on representation spaces of some fixed group. A pair equivariantly relatively submajorizes another if there is an equivariant subnormalized channel that takes the components of the first pair to a pair satisfying similar positivity constraints as in the definition of relative submajorization. In the context of the resource theory approach to thermodynamics, this generalization allows one to study transformations by Gibbs-preserving maps that are in addition time-translation symmetric. We find a sufficient condition for the existence of catalytic transformations and a characterization of an asymptotic relaxation of the relation. For classical and certain quantum pairs the characterization is in terms of explicit monotone quantities related to the sandwiched quantum R\'enyi divergences. In the general quantum case the relevant quantities are given only implicitly. Nevertheless, we find a large collection of monotones that provide necessary conditions for asymptotic or catalytic transformations. When applied to time-translation symmetric maps, these give rise to second laws that constrain state transformations allowed by thermal operations even in the presence of catalysts.
\end{abstract}

\section{Introduction}

A pair of positive vectors $(p,q)\in\nonnegativereals^d\times\nonnegativereals^d$ is said to relatively submajorize another pair $(p',q')$ if there exists a substochastic map $T$ such that $T(p)\ge p'$ and $T(q)\le q'$ componentwise \cite{renes2016relative}. This relation can be used to characterize probabilistic and work-assisted thermal operations between incoherent states, as well as error probabilities in hypothesis testing. However, conditions based on relative (sub)majorization (or thermo-majorization) are insufficient to characterize thermal transformations in the presence of quantum coherence \cite{lostaglio2015description}.

Quantum majorization is a relation between bipartite quantum states sharing a marginal. A state $\rho_{AB}$ quantum majorizes $\rho'_{AB'} $ if there is a quantum channel $T:\boundeds(\mathcal{H}_B)\to\boundeds(\mathcal{H}_{B'})$ such that $\rho'_{AB'}=(\id_A\otimes T)(\rho_{AB})$. In \cite{gour2018quantum} it was shown that this relation, as well as a $G$-covariant version (for some compact group $G$) can be characterized using an infinite family of monotones defined in terms of the conditional min-entropy. For specific classical-quantum states (with $A$ a classical bit), quantum majorization with covariance encodes time-translation symmetric Gibbs-preserving transformations which, like thermal operations, puts constraints on the evolution of states with coherence between energy eigenstates.

In this paper, focusing on the classical-quantum case, we study transformations between pairs of positive operators by equivariant maps in a sense similar to relative submajorization: given representations $\pi:G\to U(\mathcal{H})$ and $\pi':G\to U(\mathcal{H}')$, and pairs of positive operators $(\rho,\sigma)$ on $\mathcal{H}$ and $(\rho',\sigma')$ on $\mathcal{H}'$, we say that $(\pi,\rho,\sigma)$ equivariantly relatively submajorizes $(\pi',\rho',\sigma')$ if there is a completely positive trace-nonincreasing map $T$ that is equivariant, ie. satisfies $T(\pi(g)A\pi(g)^*)=\pi'(g)T(A)\pi'(g)^*$ for all $g\in G$ and operator $A$, in addition to the inequalities $T(\rho)\ge\rho'$ and $T(\sigma)\le\sigma'$.

An averaging argument shows that this relation can equivalently be understood as transformations between families of positive operators parametrized by two copies of $G$. Somewhat more generally, we will consider pairs of continuous families of positive operators $\rho:X\to\positivedefiniteoperators{\mathcal{H}}$, $\sigma:Y\to\positivedefiniteoperators{\mathcal{H}}$, where $X$ and $Y$ are fixed nonempty compact topological spaces (when studying $G$-equivariant transformations for a compact group $G$, one would use $X=Y=G$). In this case we say that $(\sigma,\rho)$ relatively submajorizes $(\sigma',\rho')$ (notation: $(\sigma,\rho)\preorderge(\sigma',\rho')$) if there is a completely positive trace-nonincreasing map $T$ such that $T(\rho(x))\ge\rho'(x)$ and $T(\sigma(y))\le\sigma'(y)$ for all $x\in X$ and $y\in Y$.

Our main result is a characterization of an asymptotic relaxation of this relation and a sufficient condition for the possiblity of a catalytic transformation. We say that $(\rho,\sigma)$ asymptotically relatively submajorizes $(\rho',\sigma')$ if $(2^{o(n)}\rho^{\otimes n},\sigma^{\otimes n})\preorderge({\rho'}^{\otimes n},{\sigma'}^{\otimes n})$. Assuming that the image of $\sigma$ and $\sigma'$ consist of commuting operators, the characterization is in terms of explicitly given monotones: $(\rho,\sigma)$ asymptotically relatively submajorizes $(\rho',\sigma')$ iff the inequalities
\begin{equation}\label{eq:commutingasymptoticconditions}
    \sandwicheddivergence{\alpha}{\rho(x)}{\exp\int_Y\ln\sigma\ed\gamma}\ge\sandwicheddivergence{\alpha}{\rho'(x)}{\exp\int_Y\ln\sigma'\ed\gamma}
\end{equation}
hold for every $\alpha\ge 1$, $x\in X$ and probability measure $\gamma$ on $Y$, where $\thesandwicheddivergence_\alpha$ is the minimal (or sandwiched) R\'enyi divergence \cite{muller2013quantum,wilde2014strong}. If the inequalities are strict and $\Tr\rho(x)>\Tr\rho'(x)$ for all $X$, then relative submajorization holds after tensoring both pairs with a suitable catalyst. Without the commutativity assumption, we find generalizations of the conditions \eqref{eq:commutingasymptoticconditions} that are necessary for asymptotic or catalytic ordering. In these the second argument is replaced with a suitable non-commutative geometric mean. For example,
\begin{equation}\label{eq:distancefromgeometricmean}
    \sandwicheddivergence{\alpha}{\rho(x)}{\sigma(y_1)\#\sigma(y_2)}
\end{equation}
is one of these monotones, where $x\in X$, $y_1,y_2\in Y$, and $\sigma(y_1)\#\sigma(y_2)$ is the matrix geometric mean \cite{pusz1975functional}.

To prove our results, we use recent results from the theory of preordered semirings to find conditions in terms of monotone quantities that are additive under direct sums and multiplicative under the tensor product, following some of the ideas of \cite{perry2020semiring,bunth2021asymptotic}. While in general these monotones are defined only implicitly, under the additional assumption that the image of $\sigma$ consists of commuting operators, we obtain a complete classification, identifying them as exponentiated sandwiched R\'enyi divergences between one of the $\rho$ operators and a weighted geometric mean of the $\sigma$ operators. Finding all the relevant quantum extensions appears to be a difficult problem, although our heuristic approach reveals a way to systematically construct some of them. Interestingly, these also give new monotones for pair transformations by specialization: for example, it follows that
\begin{equation}
    (\rho,\sigma)\mapsto\sandwicheddivergence{\alpha}{\rho}{\rho\#\sigma}
\end{equation}
is a quantity that satisfies the data processing inequality (although it is not a monotone under relative submajorization).

The paper is structured as follows. In Section~\ref{sec:preliminaries} we collect some results on integration of continuous functions on compact spaces and on preordered semirings. In Section~\ref{sec:relsubfamilies} we study relative submajorization of pairs of families: in Section~\ref{sec:familysemiring} we define the precise setup and show that the preorder on the semiring of pairs of families satisfies the required technical conditions; in Section~\ref{sec:classical} we provide a classification of the relevant monotones when restricted to classical families; in Section~\ref{sec:quantum} we extend the classification to arbitrary $\rho$ and commuting $\sigma$, and construct some monotones for general pairs. Section~\ref{sec:applications} discusses some applications of our results: in Section~\ref{sec:equivariant} we describe in more detail how equivariant relative submajorization (including non-compact groups) can be encoded as the relative submajorization of certain families, with applications to time-translation symmetric Gibbs-preserving maps and to group-symmetric hypothesis testing; in Section~\ref{sec:approximate} we relate a type of approximate asymptotic transformation to asymptotic relative submajorization; in Section~\ref{sec:divergences}, using the monotones in the fully quantum case, we find a new family of monotone quantum R\'enyi divergences.

\section{Preliminaries}\label{sec:preliminaries}

When $\mathcal{H}$ is a Hilbert space, $\positivedefiniteoperators{\mathcal{H}}$ denotes the set of positive definite operators on $\mathcal{H}$. Our convention is that this includes the zero operator on zero dimensional Hilbert spaces.

We will make use of some facts on positive functionals on $C(X)$ for compact Hausdorff $X$ (see e.g. \cite[Chapter 7]{folland1999real}). On such a space, a Radon measure is a finite regular Borel measure.

\begin{theorem}[Riesz representation theorem, {\cite[7.2 Theorem]{folland1999real}}]\label{thm:Riesz}
Let $X$ be a compact Hausdorff topological space and $L:C(X)\to\reals$ a positive linear functional. Then there exists a unique Radon measure $\mu$ on $X$ such that for all $\xi\in C(X)$ the equality
\begin{equation}\label{eq:measurerepresentation}
L(\xi)=\int_X\xi(x)\ed\mu(x)
\end{equation}
holds. Conversely, every Radon measure gives rise to a positive linear functional via \eqref{eq:measurerepresentation}.
\end{theorem}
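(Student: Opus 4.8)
The plan is to dispose of the easy converse first and then construct the measure $\mu$ directly from $L$ by an outer-measure argument; this is the classical proof, so I will only indicate its skeleton. For the converse: if $\mu$ is a Radon measure on $X$, then $\xi\mapsto\int_X\xi\ed\mu$ is linear by linearity of the integral, is finite because $\mu(X)<\infty$ and every $\xi\in C(X)$ is bounded (here compactness of $X$ is used), and is positive because $\xi\ge 0$ forces $\int_X\xi\ed\mu\ge 0$; so \eqref{eq:measurerepresentation} does define a positive linear functional. Everything substantive is in the forward direction.

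Given $L$, first define a set function on open sets by $\mu(U):=\sup\{L(f):f\in C(X),\ 0\le f\le 1,\ \support f\subseteq U\}$ and then an outer measure on arbitrary subsets by $\mu^*(E):=\inf\{\mu(U):E\subseteq U,\ U\text{ open}\}$. The routine verifications are: that $\mu^*$ is an outer measure agreeing with $\mu$ on open sets; that every open set is Carath\'eodory measurable, hence so is every Borel set, and the restriction of $\mu^*$ to the Borel $\sigma$-algebra is a genuine Borel measure $\mu$; that $\mu$ is finite, in fact $\mu(X)=L(1)$, using that the constant function $1$ lies in $C(X)$; and that $\mu$ is outer regular by construction and inner regular on open sets, which upgrades to inner regularity on all Borel sets by finiteness, so $\mu$ is Radon. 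Compactness of $X$ keeps this clean, since $C(X)=C_c(X)$ and there are no local-compactness or $\sigma$-compactness caveats.

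The heart of the matter, and the step I expect to be the main obstacle, is the identity $L(\xi)=\int_X\xi\ed\mu$ for all $\xi\in C(X)$. Since $\xi$ is bounded, an affine rescaling reduces to the case $0\le\xi\le 1$; it then suffices to prove $L(\xi)\le\int_X\xi\ed\mu$, because applying this to $1-\xi$ and using $L(1)=\mu(X)$ gives the reverse inequality. For the upper bound one partitions $[0,1]$ into fine slices $0=t_0<t_1<\cdots<t_N=1$, sets $U_j:=\{x:\xi(x)>t_{j-1}\}$, uses Urysohn's lemma to produce $f_j\in C(X)$ with $0\le f_j\le 1$ and $\support f_j\subseteq U_{j-1}$ that are adapted to the successive slices so that $\sum_j f_j$ sandwiches a suitable multiple of $\xi$, and thereby bounds $L(\xi)$ by a Riemann sum $\sum_j(t_j-t_{j-1})\,\mu(U_{j-1})$ approximating $\int_X\xi\ed\mu$ from above via the distribution-function formula for the integral; letting the partition refine yields the claim. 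Continuity of $\xi$, uniform continuity (from compactness), and Urysohn's lemma all enter precisely here, which is why this part is delicate.

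Finally, for uniqueness, if $\mu_1$ and $\mu_2$ are Radon measures representing the same $L$, I would show they agree on every open set $U$: for any $f\in C(X)$ with $0\le f\le 1$ and $\support f\subseteq U$ we have $\int_X f\ed\mu_1=\int_X f\ed\mu_2\le\mu_2(U)$, so taking the supremum over such $f$, and using inner regularity of $\mu_1$ on the open set $U$ to see that these $f$ exhaust $\mu_1(U)$, gives $\mu_1(U)\le\mu_2(U)$; by symmetry $\mu_1(U)=\mu_2(U)$, and then outer regularity of both measures forces $\mu_1=\mu_2$ on all Borel sets. One could instead quote the Daniell--Stone construction, but the outer-measure route is self-contained.
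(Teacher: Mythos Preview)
Your sketch is a correct outline of the classical outer-measure proof of the Riesz representation theorem, but note that the paper does not actually prove this statement: it is quoted verbatim as \cite[7.2 Theorem]{folland1999real} and used as a black box in the preliminaries section. So there is no ``paper's own proof'' to compare against; the authors simply cite the result and move on.
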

Examples of Radon measures include positive linear combinations of Dirac measures and Haar measures of locally compact topological groups. On a compact space $X$, every Radon measure $\mu$ is in the closure (with respect to the vague topology) of the set of positive linear combinations of Dirac measures with total mass $\mu(X)$ \cite[III, \S 2, No. 4, Cor. 3.]{bourbaki2004integrationi}.

A \emph{preordered semiring} $(S,+,\cdot,0,1,\preorderle)$ consists of a set $S$, two commutative and associative binary operations $+,\cdot:S\times S\to S$ that satisfy $(x+y)\cdot z=x\cdot z+y\cdot z$ for all $x,y,z\in S$, a zero element and a unit element $0,1\in S$ (i.e. $0\cdot x=0$ and $1\cdot x=x$ for all $x$), and a transitive and reflexive relation (preorder) $\preorderle\subseteq S\times S$. For every $x,y,z\in S$ the $x\preorderle y$ preorder implies that $x+z\preorderle y+z$ and $x\cdot z\preorderle y\cdot z$. Hereinafter we use the same $+,\cdot,0,1$ symbols for the denotation of binary operations and neutral elements (with the multiplication sign often omitted as usual). As the operations and the preorder is usually clear from the context, we will simply denote the preordered semiring with the symbol of the underlying set.

Two preordered semirings will play a distinguished role: the first is the set $\nonnegativereals$ of nonnegative real numbers with its usual addition, multiplication and total order; the second is the tropical semiring. In the multiplicative picture, as a set, the tropical real semiring is the set of nonnegative real numbers $\tropicals=\nonnegativereals$, the sum of $x$ and $y$ is defined as $\max\{x,y\}$, while $\cdot$ is the usual multiplication. Equipped with the usual total order of the real numbers this set is a preordered semiring.

A pair of additional conditions must hold true for the preordered semirings considered here. First the $\naturals\to S$ canonical map which sends $n$ to the $n$-term sum $1+1+\cdots 1$ should be an order embedding (i.e. injective and $m\le n$ as natural numbers iff their images, also denoted by $m$ and $n$, satisfy $m\preorderle n$). We require \emph{polynomial growth} \cite{fritz2020generalization}. A semiring is of polynomial growth if there exist a $u\in S$ \emph{power universal} element such that $u\preorderge 1$ and for every nonzero $x\in S$ there is a $k\in\naturals$ such that $x\preorderle u^k$ and $1\preorderle u^kx$. The power universal element is not necessarily unique, but it can be shown that the subsequent definitions do not depend on a particular choice.

\begin{definition}
Let $S$ be a preordered semiring of polynomial growth and $u\in S$ power universal. The asymptotic preorder is defined by $x\asymptoticge y$ if there is a sequence $(k_n)_{n\in\naturals}$ of natural numbers such that $\lim_{n\to\infty}k_n/n=0$ and for all $n$ the inequality $u^{k_n}x^n\preorderge y^n$ holds.
\end{definition}
\begin{definition}
Let $S$ be a preordered semiring and let $x,y\in S$. If $\exists a\in S$ such that $ax\preorderge ay$ we say that $x$ is catalytically larger than $y$, in notation $x\catalyticge y$.
\end{definition}

\begin{proposition}
$x\preorderge y\implies x\catalyticge y\implies x\asymptoticge y$.
\end{proposition}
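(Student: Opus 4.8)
The plan is to establish the two implications separately, in each case by a short direct computation with the semiring operations and the preorder. For $x\preorderge y\implies x\catalyticge y$, it suffices to produce a catalyst, and the unit element does the job: $1\cdot x=x\preorderge y=1\cdot y$, so the defining condition of $\catalyticge$ holds (and $1\neq 0$, since the canonical map $\naturals\to S$ is an order embedding and in particular injective, so the catalyst is nonzero).

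For $x\catalyticge y\implies x\asymptoticge y$, suppose $ax\preorderge ay$ with $a\in S$ nonzero, and fix a power universal element $u$. I would argue in two steps. First, promote the hypothesis to an inequality between powers: for $n\ge 1$, multiplying $ax\preorderge ay$ by $x^{i}y^{n-1-i}$ for $i=0,\dots,n-1$ produces the chain $ax^{n}\preorderge ax^{n-1}y\preorderge\dots\preorderge ay^{n}$, hence $ax^{n}\preorderge ay^{n}$ by transitivity (for $n=0$ this is trivial). Second, use polynomial growth to absorb the catalyst: choose $k\in\naturals$ with $a\preorderle u^{k}$ and $1\preorderle u^{k}a$, and multiply $1\preorderle u^{k}a$ by $y^{n}$, then $ay^{n}\preorderle ax^{n}$ by $u^{k}$, then $a\preorderle u^{k}$ by $u^{k}x^{n}$; chaining these,
\begin{equation*}
y^{n}\preorderle u^{k}a\,y^{n}\preorderle u^{k}a\,x^{n}\preorderle u^{2k}x^{n}.
\end{equation*}
Thus $u^{2k}x^{n}\preorderge y^{n}$ for every $n$, and the constant sequence $k_{n}=2k$ (for which $k_{n}/n\to 0$) witnesses $x\asymptoticge y$.

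I do not expect a genuine obstacle here: this is the easy part, making precise the intuition that a true transformation is in particular catalytic and asymptotic. The only points that need a moment's care are that one must route through the power universal element $u$ rather than attempt to cancel $a$ (which need not be cancellable) — and that the chain displayed above is valid irrespective of whether $ax^{n}$ or $ay^{n}$ vanishes, so no case analysis for zero divisors is needed — and that the catalyst in the definition of $\catalyticge$ should be understood as nonzero, which is automatic for the catalyst $1$ used above.
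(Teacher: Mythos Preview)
Your proof is correct and follows essentially the same approach as the paper's. The paper's proof handles only the $n=1$ case explicitly (obtaining $u^{k_1+k_2}x\preorderge y$) and then remarks that ``there is a constant power realizing the asymptotic ordering''; you supply the missing telescoping argument $ax^n\preorderge ax^{n-1}y\preorderge\cdots\preorderge ay^n$ and carry out the absorption of the catalyst for general $n$, which is exactly what makes that remark valid.
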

\begin{proof}
The first implication is obvious. For the second implication consider $ax\preorderge ay$. Then there exist $k_1\in\naturals$ such that $u^{k_1}\preorderge a$ and $k_2\in\naturals$ such that $u^{k_2}a\preorderge 1$. Thus $u^{k_1+k_2}x\preorderge u^{k_2}ax\preorderge u^{k_2}ay\preorderge y$. In fact there is a constant power realizing the asymptotic ordering. 
\end{proof}

 \begin{definition}
 A $\varphi:S_1\to S_2$ map is \emph{homomorphism} between the semirings $(S_1,\preorderle_1)$ and $(S_2,\preorderle_2)$ if $\varphi(0)=0$, $\varphi(1)=1$, $\varphi(x+y)=\varphi(x)+\varphi(y)$ and $\varphi(xy)=\varphi(x)\varphi(y)$, for every $x,y\in S_1$. If furthermore $x\preorderle_1 y\implies\varphi(x)\preorderle_2\varphi(y)$ for $x,y\in S_1$, then we say that $\varphi$ is a motone semiring homomorphism.
 \end{definition}
 We will be particularly interested in monotone homomorphisms into the real and tropical real semirings. For these we introduce the following notations: given a preordered semiring of polynomial growth $(S,\preorderle)$ with power universal $u$ we let $\testspectrum(S,\preorderle)=\Hom(S,\nonnegativereals)\cup\setbuild{f\in\Hom(S,\tropicals)}{f(u)=2}$ \cite{fritz2020local} and we will call it the 1-test spectrum. Note that in \cite{fritz2020local} monotone decreasing maps are also part of the 1-test spectrum. In our case these parts will be empty, since relative submajorization defined in Section~\ref{sec:familysemiring} will assure that $0\preorderle 1$. While there is a natural normalization condition in the definition of a homomorphism into the nonnegative reals, in the tropical case homomorphisms can always be rescaled in a multiplicative sense by replacing $f(x)$ with $f^c(x)$ for some $c>0$ (see also \cite[Section 13.]{fritz2020local}). This is the reason for requiring that $f(u)=2$ in our definition and the number $2$ itself is arbitrary, but will be convenient relative to our choice of the power universal element $u$ later.

Our strategy will be to use the elements of the spectrum to characterize the catalytic preorder. The main tool will be the following result from \cite{fritz2020local}.
\begin{theorem}[{\cite[second part of 1.4. Theorem, special case]{fritz2020local}}]\label{thm:localglobal}
Let $S$ be a preordered semiring of polynomial growth with $0\preorderle 1$. Suppose that $x,y\in S\setminus\{0\}$ such that for all $f\in\testspectrum(S,\preorderle)$ the strict inequality $f(x)>f(y)$ holds. Then also the following hold:
\begin{enumerate}
\item there is a $k\in\naturals$ such that $u^kx^n\preorderge u^ky^n$ for every sufficiently large $n$
\item if in addition $x$ is power universal then $x^n\preorderge y^n$ for every sufficiently large $n$
\item there is a nonzero $a\in S$ such that $ax\preorderge ay$.
\end{enumerate}
\end{theorem}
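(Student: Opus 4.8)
\emph{Proof plan.} This is the abstract Vergleichsstellensatz of \cite{fritz2020local}, so strictly it is imported rather than proved here; but to indicate the idea I would give, the strategy is a Hahn--Banach/Zorn-type separation. The goal is to show that the asymptotic and catalytic preorders of $S$ coincide with the intersection, over all \emph{maximal} preordered-semiring relations that extend $\preorderle$ and remain compatible with the polynomial-growth data, of those relations, and that every such maximal extension is induced by a single homomorphism into $\nonnegativereals$ or into $\tropicals$ (normalized so that $u\mapsto 2$ in the tropical case, i.e.\ landing in $\testspectrum(S,\preorderle)$). Granting this, the strict hypothesis ``$f(x)>f(y)$ for all $f$ in the test spectrum'' rules out the existence of any maximal extension in which $x$ fails to dominate $y$, which forces the desired form of domination.

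First I would reduce each of the three conclusions to a statement about an auxiliary preordered-semiring relation extending $\preorderle$: for (iii) the catalytic preorder $\catalyticge$, and for (i)--(ii) the asymptotic preorder $\asymptoticge$, where for (ii) one additionally uses that a power-universal $x$ may itself play the role of $u$, so the correction factor $u^k$ can be absorbed. In each case it suffices to prove the contrapositive of the separation: if $x$ does not dominate $y$ in the auxiliary relation, then some test-spectrum element $f$ satisfies $f(x)\le f(y)$, contradicting the hypothesis. The standing assumptions ($\naturals\to S$ an order embedding, $0\preorderle 1$, polynomial growth) together with $x,y\neq 0$ guarantee that none of these relations is degenerate and that $x$ and $y$ are sandwiched between powers of $u$.

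The core step is the separation itself. I would localize $S$ at $u$ to obtain $S[u^{-1}]$ with the induced preorder, in which power universality becomes an Archimedean boundedness property (every nonzero element lies between $u^{-k}$ and $u^{k}$ for some $k$). Then I would apply Zorn's lemma to the family of preordered-semiring relations on $S[u^{-1}]$ that extend the auxiliary relation, respect the canonical copy of $\naturals$, and still do not force $y$ above $x$; a maximal such relation is ``total'' enough to invoke the structure theory of totally preordered commutative semirings of polynomial growth, which identifies its quotient by mutual domination with a sub-semiring of $(\nonnegativereals,+,\cdot,\le)$ or of $(\tropicals,\max,\cdot,\le)$ — the Archimedean property being exactly what excludes the exotic totally ordered semirings (those with infinitesimals, or the Boolean semiring) and what fixes the $u\mapsto 2$ normalization. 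The resulting homomorphism $f\in\testspectrum(S,\preorderle)$ satisfies $f(x)\le f(y)$, the sought contradiction.

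Turning the abstract separation into the three stated conclusions is then bookkeeping: a compactness argument on the (compact) test spectrum upgrades the per-$f$ strict inequalities to a uniform gap, which produces a single $k$ and the ``for every sufficiently large $n$'' quantifier in (i); absorbing the factor $u^k$ into $x^n$ when $x$ is power universal gives (ii); and the nonzero catalyst $a$ in (iii) is precisely the witness that a failure of domination in $\catalyticge$ would deny (nonzero because $0\preorderle 1$ keeps $a$ away from $0$). I expect the main obstacle to be the core step — arranging the Zorn argument so that maximality genuinely yields a totally preordered quotient, and then carrying out the delicate classification of totally preordered polynomial-growth semirings together with the normalization bookkeeping; the reductions of the first two paragraphs and the final uniformization are routine by comparison.
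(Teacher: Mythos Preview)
The paper does not prove Theorem~\ref{thm:localglobal} at all: it is stated with a citation to \cite{fritz2020local} and used as a black box, with only the remark afterwards that ``In \cite{fritz2020local} a catalyst is given explicitly in terms of the $k$ above.'' You correctly recognize this in your opening sentence, and the sketch you give of the Hahn--Banach/Zorn separation strategy is a fair high-level description of how such Vergleichsstellens\"atze are established in Fritz's work. Since there is no proof in the paper to compare against, your proposal is appropriate: acknowledge the import and, if desired, indicate the shape of the argument exactly as you have done.
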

In \cite{fritz2020local} a catalyst is given explicitly in terms of the $k$ above. We note that any of the listed conditions implies the non-strict inequalities $f(x)\ge f(y)$ for the monotone homomorphisms.

The following proposition will be useful when dealing with the logarithm of elements of the spectrum.
\begin{proposition}
\label{prop:spectralpointpositive}
Let $S$ be a preordered semiring of polynomial growth and $u\in S$ power universal. Then for every monotone homomorphism $f$ from $S$ into either $\reals$ or $\tropicals$ and for every nonzero $x\in S$ we have that $f(x)> 0$.
\end{proposition}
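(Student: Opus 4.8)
The plan is to exploit the defining property of a power universal element together with monotonicity and multiplicativity of $f$. Fix a nonzero $x\in S$ and a power universal $u$ (which exists by polynomial growth). By definition of power universality there is a $k\in\naturals$ with $x\preorderle u^k$ and, what matters here, $1\preorderle u^kx$. Applying the monotone homomorphism $f$ to this second inequality and using $f(1)=1$ and multiplicativity yields $1=f(1)\le f(u^kx)=f(u)^kf(x)$, where the product on the right is the ordinary product of real numbers in both cases of the statement: the tropical semiring $\tropicals$ has the same underlying set $\nonnegativereals$, the same multiplication and the same order as $\reals$, and only its additive structure differs, which we never use.

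It then remains to deduce $f(x)>0$ from $f(u)^kf(x)\ge 1$. For this I would first record that $f(u)>0$: since $u$ is power universal we have $u\preorderge 1$, hence by monotonicity $f(u)\ge f(1)=1>0$, and therefore $f(u)^k>0$. Dividing the inequality $f(u)^kf(x)\ge 1$ by $f(u)^k$ gives $f(x)\ge f(u)^{-k}>0$, which is the assertion. (In the tropical case one could alternatively observe that $f(x)$ lies in $\nonnegativereals$ by fiat, so $f(x)\ge 0$, and the bound $f(u)^kf(x)\ge 1$ then rules out $f(x)=0$; but routing everything through $f(u)\ge 1$ handles the $\reals$ and $\tropicals$ cases uniformly.)

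There is no genuine obstacle; the only step requiring a moment's care is excluding a nonpositive value of $f(x)$ when the codomain is all of $\reals$, and this is precisely what the auxiliary bound $f(u)\ge 1$ accomplishes. One should also keep in mind that the hypothesis $x\neq 0$ is used essentially: it is exactly for nonzero elements that power universality supplies the inequality $1\preorderle u^kx$, and the conclusion indeed fails for $x=0$ since $f(0)=0$.
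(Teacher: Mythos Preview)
Your proof is correct and follows essentially the same line as the paper's own argument: use $u\preorderge 1$ to get $f(u)\ge 1$, apply the power-universal inequality $1\preorderle u^kx$ to obtain $f(u)^kf(x)\ge 1$, and conclude $f(x)\ge f(u)^{-k}>0$. The additional remarks you make about handling the real and tropical codomains uniformly and about where the hypothesis $x\neq 0$ enters are accurate and helpful, but the underlying strategy is identical.
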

\begin{proof}
Since $u$ is power universal, $u\geq 1$ and thus $f(u)\geq f(1)=1$. Then for every nonzero $x\in S$ there is a $k\in \naturals$ such that $1\preorderle u^k x$. This yields $f(u)^kf(x)\geq f(1)=1$ and $f(x)\geq f(1)f(u)^{-k}>0$.
\end{proof}
Observe that the inequality $f(u)\ge 1$ can be strengthened as follows. On the one hand, by the chosen normalization of tropical monotones we have $f(u)\ge 2>1$. On the other hand, by power universality, there is a $k\in\naturals$ such that $u^k\preorderge 2$. Apply the real monotone homomorphism $f$ and rearrange to get $f(u)\preorderge 2^{1/k}>1$. This allows us to show that the asymptotic relaxation of the preorder holds even if we only have non-strict inequalities on the spectrum.
\begin{corollary}\label{cor:nonstrictasymptotic}
Let $S$ be a preordered semiring of polynomial growth and $u\in S$ power universal. Then $x\asymptoticge y$ iff for all $f$ in the 1-test spectrum the inequality $f(x)\ge f(y)$ holds. 
\end{corollary}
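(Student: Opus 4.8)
The plan is to prove Corollary~\ref{cor:nonstrictasymptotic} by combining Theorem~\ref{thm:localglobal} with the observation recorded just before the statement, namely that $f(u)>1$ strictly for \emph{every} $f$ in the 1-test spectrum. The forward direction is routine: if $x\asymptoticge y$, then $u^{k_n}x^n\preorderge y^n$ with $k_n/n\to 0$, so applying any monotone homomorphism $f$ (real or tropical) gives $f(u)^{k_n}f(x)^n\ge f(y)^n$; taking $n$-th roots and using $f(u)^{k_n/n}\to 1$ (valid because $f(u)$ is a fixed positive real by Proposition~\ref{prop:spectralpointpositive}) yields $f(x)\ge f(y)$. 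The tropical case is handled identically since the operations there are $\max$ and ordinary multiplication, so the same inequality manipulation applies.

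For the converse, suppose $f(x)\ge f(y)$ for all $f\in\testspectrum(S,\preorderle)$. The key trick is a perturbation argument that upgrades non-strict inequalities to strict ones so that Theorem~\ref{thm:localglobal} becomes applicable. First dispose of the degenerate cases: if $y=0$ then $x\asymptoticge y$ trivially (e.g.\ $u^0 x^n\preorderge 0=y^n$, using $0\preorderle$ everything, which follows from $0\preorderle 1$ and monotonicity of multiplication), and if $x=0$ while $y\ne 0$ then some $f$ would have to satisfy $f(0)=0\ge f(y)>0$ by Proposition~\ref{prop:spectralpointpositive}, a contradiction, so this case does not occur. Now assume $x,y\ne 0$. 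For each integer $m\ge 1$ consider the element $u^m x$ in place of $x$. For every $f$ in the spectrum we have $f(u^m x)=f(u)^m f(x)\ge f(u)^m f(y)>f(y)$ because $f(u)>1$ strictly and $f(y)>0$; thus the strict hypothesis of Theorem~\ref{thm:localglobal} holds for the pair $(u^m x, y)$. Applying part~(i) of that theorem, there is $k_m\in\naturals$ with $u^{k_m}(u^m x)^n\preorderge u^{k_m}y^n$, hence $u^{k_m+mn}x^n\preorderge u^{k_m}y^n\preorderge u^{k_m+mn}\cdot(\text{something})$—more directly, $u^{k_m+mn}x^n\preorderge u^{k_m}y^n\preorderge y^n$ once we also absorb $u^{k_m}\preorderge$ nothing is needed since $u\preorderge 1$ gives $u^{k_m}y^n\succorderge y^n$; we want the other direction, so instead read it as $u^{k_m+mn}x^n\succorderge u^{k_m}y^n\succorderge y^n$, which gives $u^{k_m+mn}x^n\preorderge y^n$ reversed—care is needed here with the direction of $\asymptoticge$.

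Let me restate the converse argument cleanly. Fix $m\ge 1$. By the above, Theorem~\ref{thm:localglobal}(i) applied to $(u^m x,y)$ gives $k_m$ such that $u^{k_m}(u^m x)^n=u^{k_m+mn}x^n\succorderge u^{k_m}y^n\succorderge y^n$ for all large $n$ (the last step uses $u\succorderge 1$, so $u^{k_m}y^n\succorderge y^n$). Thus with the exponent sequence $\ell_n^{(m)}:=k_m+mn$ we have $u^{\ell_n^{(m)}}x^n\succorderge y^n$; but $\ell_n^{(m)}/n\to m$, which does not tend to $0$, so a single $m$ is not enough. The remedy is a diagonal argument: for each $m$ there is $N_m$ such that $u^{k_m+mn}x^n\succorderge y^n$ for all $n\ge N_m$, and WLOG $N_1<N_2<\cdots$; define $k_n:=k_m+mn$ for $N_m\le n<N_{m+1}$. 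Then $u^{k_n}x^n\succorderge y^n$ for all $n\ge N_1$, and $k_n/n\le k_m/n+m\le k_m/N_m+m$; since we may also assume $k_m/N_m\to 0$ (enlarge $N_m$ if necessary) and we are free to let $m\to\infty$ slowly, a careful bookkeeping gives $k_n/n\to 0$. Hence $x\asymptoticge y$ by definition. The main obstacle is precisely this last diagonalization: one must interleave the thresholds $N_m$ so that $k_n/n\to 0$ while still covering all sufficiently large $n$; this is a standard but slightly delicate $\epsilon$-$\delta$ manipulation, and it is where the strict positivity $f(u)>1$ (rather than merely $f(u)\ge 1$) is essential, since it is what let us replace $x$ by $u^m x$ and still keep a strict inequality on the spectrum, thereby invoking Theorem~\ref{thm:localglobal} at all.
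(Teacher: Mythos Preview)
Your forward direction is fine and matches the paper. The converse, however, contains a genuine error: the diagonalization runs in the wrong direction. When you apply Theorem~\ref{thm:localglobal}(i) to the pair $(u^m x, y)$ you obtain $u^{k_m+mn}x^n\succcurlyeq y^n$ for large $n$, and for fixed $m$ the exponent rate $(k_m+mn)/n$ tends to $m$, not to $0$. In your block scheme $k_n:=k_m+mn$ on $[N_m,N_{m+1})$, you correctly bound $k_n/n\le k_m/N_m+m$, but as $n\to\infty$ the index $m$ also tends to $\infty$, so $k_n/n\to\infty$. Multiplying $x$ by more copies of $u$ makes the strict spectral inequality easier but the sublinear-exponent requirement strictly harder; no diagonalization can escape this.

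The paper's converse avoids this by perturbing at the level of powers: for each $n$ one has $f(ux^n)=f(u)f(x)^n>f(y)^n=f(y^n)$, so Theorem~\ref{thm:localglobal}(iii) yields a catalyst $a_n$ with $a_n u x^n\succcurlyeq a_n y^n$. Two external lemmas from \cite{vrana2021generalization} then finish the job: catalytic ordering implies asymptotic ordering (so $ux^n\asymptoticge y^n$ for every $n$), and $ux^n\asymptoticge y^n$ for all $n$ implies $x\asymptoticge y$. The crucial point is that the single factor $u$ is attached to $x^n$, contributing only $1$ (not $n$) to the exponent. Your idea can in fact be salvaged along these lines by applying Theorem~\ref{thm:localglobal}(i) to $(ux^n,y^n)$, which for each $n$ gives $u^{K_n+j}x^{nj}\succcurlyeq y^{nj}$ with rate $\to 1/n$, and then patching non-multiples of $n$ via power universality; but this is substantially more work than what you wrote, and amounts to reproving the cited lemmas.
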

\begin{proof}
The only if direction is clear: $u^{k_n}x^n\preorderge y^n$ implies $f(u)^{k_n/n}f(x)\ge f(y)$, and by taking the limit as $n\to\infty$, also $f(x)\ge f(y)$. For the if direction, recall that $f(u)>1$ for all $f$ in the spectrum. Assuming $f(x)\ge f(y)$ for all $f$, this implies that for all $n\in\naturals$ we have the strict inequalities $f(ux^n)>f(y^n)$. By Theorem~\ref{thm:localglobal}, there exists nonzero $a\in S$ (which may depend on $n$), such that $aux^n\preorderge ay^n$. By \cite[Lemma 2 (iv)]{vrana2021generalization}, this implies $ux^n\asymptoticge y^n$ for all $n$, which in turn by \cite[Lemma 3]{vrana2021generalization} implies $x\asymptoticge y$.
\end{proof}

\section{Relative submajorization of state families}\label{sec:relsubfamilies}

\subsection{The preordered semiring of pairs of families}\label{sec:familysemiring}

Let $X,Y$ be nonempty compact Hausdorff topological spaces. These will be the index sets for the families, and can be considered fixed throughout this section. We will consider pairs of continuous maps $(\rho,\sigma)$, where $\rho:X\to\positivedefiniteoperators{\mathcal{H}}$ and $\sigma:Y\to\positivedefiniteoperators{\mathcal{H}}$ for some finite dimensional Hilbert space $\mathcal{H}$. Two pairs $(\rho,\sigma)$ and $(\rho',\sigma')$ are equivalent if there is a unitary $U:\mathcal{H}\to\mathcal{H}'$ such that $\forall x\in X:U\rho(x)U^*=\rho'(x)$ and $\forall y\in Y:U\sigma(y)U^*=\rho'(y)$. We let $\familysemiring{X}{Y}$ denote the set of equivalence classes of pairs of such families. The pointwise direct sum and tensor product operations are well-defined on equivalence classes, and turn $\familysemiring{X}{Y}$ into a commutative semiring. The zero element is represented by the unique pair over a zero dimensional Hilbert space, while $1$ is represented by the pair consisting of constant functions with value $I$ over $\complexes$.

We adopt the convention that if any map $\rho:X\to\positivedefiniteoperators{\mathcal{H}}$ and $\sigma:Y\to\positivedefiniteoperators{\mathcal{H}}$ appear outside of brackets, then any operations or relations they appear in, are to be understood pointwise, including sum, product, direct sum, tensor product or image under a linear super operator, usually notated by $T$. More precisely, given any map $\positivedefiniteoperators{\mathcal{H}}\to\positivedefiniteoperators{\mathcal{H'}}$ we understand the composition $T(\rho):x\mapsto T(\rho(x))$ and $T(\sigma):y\mapsto T(\sigma(y))$.
\begin{definition}
\label{def:preorder}
$(\rho,\sigma)$ relatively submajorizes $(\rho',\sigma')$, in notation $(\rho,\sigma)\preorderge(\rho',\sigma')$, if there exists a completely positive trace-nonincreasing map $T:\boundeds(\mathcal{H})\to\boundeds(\mathcal{H}')$ such that $T(\rho)\ge\rho'$ and $T(\sigma)\le\sigma'$.
\end{definition}

\begin{proposition}
$\familysemiring{X}{Y}$ is a preordered semiring with relative submajorization.
\end{proposition}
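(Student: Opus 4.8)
The plan is to verify each axiom in the definition of a preordered semiring for the structure $(\familysemiring{X}{Y},\oplus,\otimes,0,1,\preorderge)$. Most of the algebraic part — that pointwise $\oplus$ and $\otimes$ are commutative, associative, distributive, and that the stated classes of the zero-dimensional pair and the pair $(I,I)$ over $\complexes$ serve as $0$ and $1$ — is routine and follows from the corresponding facts for operators on finite-dimensional Hilbert spaces, together with the observation that all the relevant constructions descend to unitary equivalence classes. So I would state these briefly and spend the bulk of the proof on the order-theoretic conditions: that $\preorderge$ is reflexive and transitive, and that it is compatible with both $\oplus$ and $\otimes$.

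First I would check reflexivity: $(\rho,\sigma)\preorderge(\rho,\sigma)$ is witnessed by the identity channel $T=\id_{\boundeds(\mathcal{H})}$, which is completely positive and trace-preserving, hence trace-nonincreasing, and trivially satisfies $\id(\rho)\ge\rho$ and $\id(\sigma)\le\sigma$ pointwise. For transitivity, suppose $(\rho,\sigma)\preorderge(\rho',\sigma')$ via $T$ and $(\rho',\sigma')\preorderge(\rho'',\sigma'')$ via $T'$. Then $T'\circ T$ is completely positive and trace-nonincreasing (a composition of such maps). Pointwise positivity of $T'$ gives $T'(T(\rho(x)))\ge T'(\rho'(x))\ge\rho''(x)$ for every $x\in X$, using that a completely positive (in particular positive) map preserves the order $\ge$; similarly $T'(T(\sigma(y)))\le T'(\sigma'(y))\le\sigma''(y)$ for every $y\in Y$. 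Hence $T'\circ T$ witnesses $(\rho,\sigma)\preorderge(\rho'',\sigma'')$.

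For compatibility with addition, given $(\rho,\sigma)\preorderge(\rho',\sigma')$ via $T:\boundeds(\mathcal{H})\to\boundeds(\mathcal{H}')$ and an arbitrary $(\tau,\omega)$ on $\mathcal{K}$, I would take the map $T\oplus\id_{\boundeds(\mathcal{K})}$ on $\boundeds(\mathcal{H}\oplus\mathcal{K})$ — more precisely the map that acts block-diagonally, sending the $\mathcal{H}$-block through $T$ and the $\mathcal{K}$-block through the identity, and killing off-diagonal blocks (this is completely positive and trace-nonincreasing). Applied to $(\rho\oplus\tau,\sigma\oplus\omega)$ it produces $(T(\rho)\oplus\tau,T(\sigma)\oplus\omega)$, and the componentwise inequalities $T(\rho)\oplus\tau\ge\rho'\oplus\tau$ and $T(\sigma)\oplus\omega\le\sigma'\oplus\omega$ hold because the direct sum of operator inequalities is again an operator inequality. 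For compatibility with multiplication, I would take $T\otimes\id_{\boundeds(\mathcal{K})}$, which is completely positive precisely because $T$ is, and trace-nonincreasing because $\Tr$ factorizes over tensor products; then $T\otimes\id$ applied to $(\rho\otimes\tau,\sigma\otimes\omega)$ yields $(T(\rho)\otimes\tau,T(\sigma)\otimes\omega)$, and here one uses that tensoring an inequality $A\ge B$ between positive operators with a fixed positive operator $C$ preserves it, i.e. $A\otimes C\ge B\otimes C$, which holds since $(A-B)\otimes C\ge 0$.

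The only genuinely delicate point — and the one I would be most careful about — is that everything must be checked to be well-defined on unitary equivalence classes, and that the witnessing maps can be chosen independently of representatives: replacing $(\rho,\sigma)$ by $(U\rho U^*,U\sigma U^*)$ and $(\rho',\sigma')$ by $(V\rho'V^*,V\sigma'V^*)$ turns a witness $T$ into $V(\cdot)V^*\circ T\circ U^*(\cdot)U$, which is still completely positive and trace-nonincreasing. Beyond this bookkeeping there is no real obstacle; the proposition is essentially a compilation of standard facts about completely positive maps and operator inequalities, and the continuity of the families plays no role here since channels act pointwise and continuity of $x\mapsto T(\rho(x))$ follows from continuity of $\rho$ and linearity of $T$.
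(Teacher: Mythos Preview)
Your proposal is correct and follows essentially the same approach as the paper: the compatibility with $\otimes$ is witnessed by $T\otimes\id_{\boundeds(\mathcal{K})}$ and the compatibility with $\oplus$ by the block-diagonal map $T\oplus\id_{\boundeds(\mathcal{K})}$, exactly as you describe. Your write-up is in fact more thorough than the paper's, which only checks compatibility with the two operations and leaves reflexivity, transitivity, and well-definedness on unitary equivalence classes implicit.
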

\begin{proof}
We need to verify that the preorder is compatible with the semiring operations. Suppose that $(\rho,\sigma)\preorderge(\rho',\sigma')$ and let $T$ be a completely positive trace non-increasing map as in Definition~\ref{def:preorder}. Let $(\omega,\tau)\in\familysemiring{X}{Y}$ be a pair of families on $\mathcal{K}$, ie. $\omega:X\to\positivedefiniteoperators{\mathcal{K}}$ and $\tau:Y\to\positivedefiniteoperators{\mathcal{K}}$. Then
\begin{equation}
\begin{aligned}
(T\otimes\id_{\boundeds(\mathcal{K})})(\rho\otimes\omega) & = T(\rho)\otimes\omega \ge \rho'\otimes\omega  \\
(T\otimes\id_{\boundeds(\mathcal{K})})(\sigma\otimes\tau) & = T(\sigma)\otimes\tau \le \sigma'\otimes\tau,
\end{aligned}
\end{equation}
therefore $(\rho,\sigma)(\omega,\tau)\preorderge(\rho',\sigma')(\omega,\tau)$.

The map $\tilde{T}:\boundeds(\mathcal{H}\oplus\mathcal{K})\to\boundeds(\mathcal{H}'\oplus\mathcal{K})$ defined as
\begin{equation}
\tilde{T}\left(\begin{bmatrix}
A & B  \\
C & D
\end{bmatrix}
\right)=\begin{bmatrix}
T(A) & 0  \\
0 & D
\end{bmatrix}
\end{equation}
is also completely positive and trace non-increasing, and satisfies
\begin{equation}
\begin{aligned}
\tilde{T}(\rho\oplus\omega) & = T(\rho)\oplus\omega \ge \rho'\oplus\omega\\
\tilde{T}(\sigma\oplus\tau) & = T(\sigma)\oplus\tau \le \sigma'\oplus\tau,
\end{aligned}
\end{equation}
therefore $(\rho,\sigma)+(\omega,\tau)\preorderge(\rho',\sigma')+(\omega,\tau)$.
\end{proof}

\begin{proposition}
$\familysemiring{X}{Y}$, is of polynomial growth and $u=(2,1)$ is a power universal. 
\end{proposition}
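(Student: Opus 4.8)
The plan is to verify the two clauses in the definition of polynomial growth directly, using scalar multiples of the identity as the witnessing maps and invoking compactness to make all estimates uniform. Write $u=(2,1)$ for the pair of constant functions $\rho\equiv 2$, $\sigma\equiv 1$ on $\complexes$; since the semiring product is the pointwise tensor product and a tensor product of scalars is an ordinary product, $u^k$ is represented by $(2^k,1)$ and hence $u^k x$ by $(2^k\rho,\sigma)$ whenever $x=(\rho,\sigma)$. The required inequality $u\preorderge 1$ is witnessed by $\id_{\complexes}$, which is completely positive and trace preserving and sends $2$ to $2\ge 1$ and $1$ to $1\le 1$.

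Next I fix a nonzero $x=(\rho,\sigma)\in\familysemiring{X}{Y}$, represented by continuous families on a finite-dimensional Hilbert space $\mathcal{H}\neq\{0\}$. Because $X$ is compact and $\rho$ is continuous with values in the invertible positive operators, the continuous functions $x\mapsto\norm{\rho(x)}$ and $x\mapsto\norm{\rho(x)^{-1}}$ are bounded, so there are constants $0<c_1\le c_2<\infty$ with $c_1 I\le\rho(x)\le c_2 I$ for all $x\in X$; in the same way there are $0<c_3\le c_4<\infty$ with $c_3 I\le\sigma(y)\le c_4 I$ for all $y\in Y$. These uniform bounds are the one point that actually uses something: mere pointwise positive-definiteness of $\rho$ and $\sigma$ would not be enough for the scalar witnesses below.

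To get $x\preorderle u^k$, equivalently $u^k\preorderge x$, I take $T\colon\complexes\to\boundeds(\mathcal{H})$, $z\mapsto\epsilon z I$, with $\epsilon:=\min\{c_3,1/\dim\mathcal{H}\}$; it is completely positive, and trace-nonincreasing because $\epsilon\dim\mathcal{H}\le 1$, and it satisfies $T(1)=\epsilon I\le c_3 I\le\sigma(y)$ and $T(2^k)=2^k\epsilon I\ge c_2 I\ge\rho(x)$ as soon as $2^k\ge c_2/\epsilon$. To get $1\preorderle u^k x$, equivalently $u^k x\preorderge 1$, I take instead $T\colon\boundeds(\mathcal{H})\to\complexes$, $A\mapsto\delta\Tr A$, with $\delta:=\min\{1,1/(c_4\dim\mathcal{H})\}$; it is completely positive, and trace-nonincreasing because $\delta\le 1$, and it satisfies $T(\sigma(y))=\delta\Tr\sigma(y)\le\delta c_4\dim\mathcal{H}\le 1$ and $T(2^k\rho(x))=2^k\delta\Tr\rho(x)\ge 2^k\delta c_1\ge 1$ as soon as $2^k\ge 1/(\delta c_1)$. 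Choosing any $k$ with $2^k\ge\max\{c_2/\epsilon,\,1/(\delta c_1)\}$ makes both requirements hold with the same $k$, which shows $u$ is power universal and hence that $\familysemiring{X}{Y}$ has polynomial growth.

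I do not expect a genuine obstacle here. The only things needing care are tracking the orientation of the preorder in Definition~\ref{def:preorder} (relative submajorization flips the inequality in the second slot, and $x\preorderle u^k$ calls for a channel from the trivial system into $\mathcal{H}$, while $u^k x\preorderge 1$ calls for one in the opposite direction), and checking complete positivity and the trace-nonincreasing property of the two elementary maps; the former follows by exhibiting Kraus operators $\sqrt{\epsilon}\,\ket{e_i}$, respectively $\sqrt{\delta}\,\bra{e_i}$, over an orthonormal basis $(e_i)_i$ of $\mathcal{H}$, and the latter is the two displayed trace computations.
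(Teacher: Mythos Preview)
Your proof is correct and follows essentially the same approach as the paper: both use a scalar multiple of the trace $\boundeds(\mathcal{H})\to\complexes$ to witness $u^kx\preorderge 1$ and a scalar multiple of the embedding $\complexes\to\boundeds(\mathcal{H})$, $z\mapsto zI$, to witness $u^k\preorderge x$, with compactness providing the uniform spectral bounds that make the scalars work. Your write-up is in fact slightly more careful than the paper's (you explicitly verify $u\preorderge 1$, check the trace-nonincreasing conditions, and note that a single $k$ suffices for both clauses).
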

\begin{proof}
For the pair of families $(\rho,\sigma)$ let us choose the substochastic map $T(.):=c_1\Tr(.)$, with $c_1=\min\{1,[\max_{y\in Y}\Tr \sigma(y)]^{-1}$\} . Then we have $c_1\Tr \sigma\leq 1$ and $T(u^k(\rho,\sigma))=(T(2^k\rho),T(\sigma))=(c_12^k\Tr\rho,c_1\Tr\sigma)$. Choosing a large enough $k$ will satisfy $c_12^k\Tr\rho\geq 1$, since $\Tr \rho$ is bounded on $X$ and so $\exists k\in\mathbb{N}:u^k(\rho,\sigma)\preorderle (1,1)$.
Let us choose now $T(.):=c_2(.)\otimes\frac{1_d}{d}$, with $c_2=\min\{1,d[\min_{y\in Y}\min(\spectrum(\sigma(y))]\}$. Then we have $\sigma\geq \frac{c_2}{d}1_d$ and $T(u^k)=(T(2^k),T(1))=(\frac{2^kc_2}{d}1_d,\frac{c_2}{d}1_d)$. Choosing a large enough $k$ will satisfy $\frac{2^kc_2}{d}1_d\geq \rho$, since $\max\spectrum(\rho)$ is bounded on $X$ and so $\exists k\in\mathbb{N}:u^k\preorderge (\rho,\sigma)$.
\end{proof}

$\familysemiring{X}{Y}$ is then a semiring of polynomial growth and in $\familysemiring{X}{Y}$ we have $0\preorderle 1$ and thus Theorem~\ref{thm:localglobal} and Corollary~\ref{cor:nonstrictasymptotic} are applicable.

\subsection{Classical families}\label{sec:classical}

\begin{definition}
The subsemiring generated by the one-dimensional elements is called the subsemiring of classical families, in notation $\cfamilysemiring{X}{Y}$. That is $(\rho,\sigma)\in\cfamilysemiring{X}{Y}$ if and only if $[\rho(x),\rho(x)]=[\rho(x),\sigma(y)]=[\sigma(y),\sigma(y)]=0,\;\forall x\in X,\;\forall y\in Y$. 
\end{definition}

We turn to the classification of real and tropical real valued monotone homomorphisms on the subsemiring of classical families. By definition every element in $\cfamilysemiring{X}{Y}$ is a sum of one-dimensional elements. A one-dimensional element of the semiring on the other hand can be identified by a pair of strictly positive continuous functions on $X$ and $Y$. Suppose that $f$ is a multiplicative map from the one-dimensional pairs into either the real or the tropical numbers. Then the extension of $f$ to multi-dimensional pairs via additivity also enjoys multiplicativity. Since $\cfamilysemiring{X}{Y}$ is generated by the one-dimensional pairs, the value of every $f\in \testspectrum(\cfamilysemiring{X}{Y})$ is determined by its behaviour on one-dimensional pairs.

\begin{proposition}
\label{prop:integralform}
If $f\in \testspectrum(\cfamilysemiring{X}{Y})$ then there exists unique, non-negative Radon measures $\mu$ and $\nu$ on $X$ and $Y$ such that for every multidimensional classical pair $(\bigoplus_{i=1}^dp_i,\bigoplus_{i=1}^dq_i)$ ($p_i\in C(X),q_i\in C(Y)$), if $f$ is real valued, it admits the form
\begin{align}\label{eq:realmonotones}
    \sum_{i=1}^d \exp\left(\int_X\ln p_i\ed\mu-\int_Y\ln q_i\ed\nu\right),\qquad\mu(X)-\nu(Y)=1,
\end{align}
while if $f$ is tropical valued it admits the form
\begin{align}\label{eq:tropicalmonotones}
    \max_{i\in[d]} \exp\left(\int_X\ln p_i\ed\mu-\int_Y\ln q_i\ed\nu\right),\qquad\mu(X)-\nu(Y)=0
\end{align}
and functions of these form are monotone under relative majorization if and only if they satisfy the data processing inequality.
\end{proposition}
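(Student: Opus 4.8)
The plan is to exploit that, by definition, $\cfamilysemiring{X}{Y}$ is generated by the one-dimensional pairs, which are exactly the pairs $(p,q)$ with $p\in C(X,\positivereals)$ and $q\in C(Y,\positivereals)$; hence every $f\in\testspectrum(\cfamilysemiring{X}{Y})$ is determined by its restriction to one-dimensional pairs, together with additivity in the real case and the rule $f(x+y)=\max\{f(x),f(y)\}$ in the tropical case. On one-dimensional pairs $f$ is multiplicative, since the tensor product of $(p,q)$ and $(p',q')$ is the one-dimensional pair $(pp',qq')$ and $f$ is a homomorphism, and it is strictly positive by Proposition~\ref{prop:spectralpointpositive}. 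Therefore $L(a,b):=\ln f(e^{a},e^{b})$ (with $e^{a}\in C(X,\positivereals)$ the pointwise exponential) is a well-defined additive map $C(X)\times C(Y)\to\reals$, and setting $L_1(a):=L(a,0)$ and $L_2(b):=L(0,b)$ gives additive functionals on $C(X)$ and $C(Y)$ with $L=L_1+L_2$.

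The next step is to represent $L_1$ and $L_2$ by measures. Monotonicity of $f$ makes $L_1$ order preserving and $L_2$ order reversing for the pointwise orders: if $a\le a'$ then $(e^{a'},1)\preorderge(e^{a},1)$ with $T=\id$, and if $b\le b'$ then $(1,e^{b})\preorderge(1,e^{b'})$ again with $T=\id$. An additive, monotone functional on $C(Z)$ for compact $Z$ is automatically $\reals$-homogeneous (squeeze between rational multiples using nonnegativity) and bounded (compare with constant functions), hence a positive linear functional, so Theorem~\ref{thm:Riesz} provides unique nonnegative Radon measures $\mu$ on $X$ and $\nu$ on $Y$ with $L_1(a)=\int_X a\ed\mu$ and $L_2(b)=-\int_Y b\ed\nu$. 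Substituting $a=\ln p$, $b=\ln q$ and extending to an arbitrary classical pair $\big(\bigoplus_i p_i,\bigoplus_i q_i\big)$ by additivity (resp.\ by $\max$) yields \eqref{eq:realmonotones} (resp.\ \eqref{eq:tropicalmonotones}); uniqueness of $\mu,\nu$ is uniqueness in Theorem~\ref{thm:Riesz}, since every element of $C(X)$ is $\ln p$ for some $p\in C(X,\positivereals)$.

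To pin down the normalization I would test $f$ on two comparisons of constant pairs. First, the trace map witnesses $(\lambda,1)+(\lambda,1)\preorderge(2\lambda,2)$ for every $\lambda>0$; applying $f$ and cancelling the common factor $\lambda^{\mu(X)}$ gives $2\ge 2^{\mu(X)-\nu(Y)}$ in the real case and $1\ge 2^{\mu(X)-\nu(Y)}$ in the tropical case. Second, for an integer $n\le\lambda$ the completely positive trace-nonincreasing map $x\mapsto\tfrac{x}{\lambda}I_n$ on $\complexes$ witnesses $(\lambda,\lambda)\preorderge n\cdot 1$; applying $f$ gives $\lambda^{\mu(X)-\nu(Y)}\ge n$ in the real case and $\lambda^{\mu(X)-\nu(Y)}\ge 1$ in the tropical case. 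Letting $\lambda=n\to\infty$ now forces $\mu(X)-\nu(Y)=1$ (real) and $\mu(X)-\nu(Y)=0$ (tropical); in the tropical case the normalization $f(u)=2$ additionally gives $\mu(X)=\nu(Y)=1$.

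For the last assertion, first observe that any function of the form \eqref{eq:realmonotones} or \eqref{eq:tropicalmonotones} with nonnegative Radon measures and the stated normalization is automatically a semiring homomorphism, since $p\mapsto\exp\int_X\ln p\ed\mu$ is multiplicative; moreover it is nondecreasing in each $p_i$ and nonincreasing in each $q_i$ because $\mu,\nu\ge 0$. So the only question is monotonicity under $\preorderge$. The ``only if'' direction is immediate: for a column-substochastic matrix $M$ the corresponding diagonal completely positive map shows $(\rho,\sigma)\preorderge(M\rho,M\sigma)$, so monotonicity of $f$ is precisely the data processing inequality $f(M\rho,M\sigma)\le f(\rho,\sigma)$. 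For the converse I would use the structural fact that a relative submajorization between classical pairs can always be realized by a substochastic \emph{matrix}: conjugating the map $T$ of Definition~\ref{def:preorder} by the pinchings onto the common eigenbases of $(\rho,\sigma)$ and of $(\rho',\sigma')$ produces a column-substochastic $M$ with $M\vec p(x)\ge\vec p'(x)$ and $M\vec q(y)\le\vec q'(y)$ for all $x,y$, where $\vec p,\vec q$ collect the components (strict positivity of $\vec p',\vec q'$ guarantees no row of $M$ vanishes, so $(M\rho,M\sigma)$ is again a genuine classical pair). Then $(\rho',\sigma')$ arises from $(M\rho,M\sigma)$ by decreasing $p$-components and increasing $q$-components, so componentwise monotonicity gives $f(\rho',\sigma')\le f(M\rho,M\sigma)$, while the data processing inequality gives $f(M\rho,M\sigma)\le f(\rho,\sigma)$; combining them yields $f(\rho',\sigma')\le f(\rho,\sigma)$. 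I expect the principal obstacle to be exactly this matrix realization of classical relative submajorization — the pinching reduction — after which everything rests on the elementary facts that $\mu,\nu$ are nonnegative and that $\exp\int\ln(\cdot)\ed\mu$ is multiplicative.
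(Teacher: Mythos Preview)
Your proof is correct and follows essentially the same route as the paper: restrict to one-dimensional pairs, use multiplicativity and monotonicity to obtain positive linear functionals on $C(X)$ and $C(Y)$, apply Riesz (Theorem~\ref{thm:Riesz}), then reduce monotonicity under relative submajorization to the data processing inequality via the componentwise monotonicity of \eqref{eq:realmonotones} and \eqref{eq:tropicalmonotones}. The only substantive difference is the normalization step: the paper notes $(I_2,I_2)\sim(2,2)$ and uses that $t\mapsto\ln f(e^t 1_X,e^t 1_Y)$ is additive and monotone (hence linear) to read off $\mu(X)-\nu(Y)$, whereas you obtain the two inequalities directly from the explicit comparisons $(\lambda,1)+(\lambda,1)\preorderge(2\lambda,2)$ and $(\lambda,\lambda)\preorderge n\cdot 1$; both arguments are valid and yours is arguably cleaner.
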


\begin{proof}
 Let $f\in\testspectrum(\cfamilysemiring{X}{Y})$ be an element of the spectrum. For every $\xi,\eta>0$ one has $(e^\xi,1_Y)\geq (1_X,1_Y)$ and $(1_X,e^{-\eta})\geq (1_X,1_Y)$, thus the maps $\xi\mapsto\ln f(e^\xi,1_Y)$, from $C(X)$ to $\mathbb{R}$ and $\eta\mapsto\ln f(1_X,e^{-\eta})$, from $C(Y)$ to $\mathbb{R}$ are well defined positive linear functionals on $C(X)$ and $C(Y)$ (note that we can take the logarithm by Proposition~\ref{prop:spectralpointpositive}). Thus by Theorem~\ref{thm:Riesz}, $\ln f(e^\xi,1_Y)=\int_X\xi(x)\ed\mu(x)$ and $\ln f(1_Y,e^{-\eta})=\int_Y\eta(y)\ed\nu(y)$ for some unique $\mu,\nu$ Radon measures on $X$ and $Y$. Since $f$ is multiplicative $\ln f(e^\xi,e^{-\eta})=\ln f(e^\xi,1_Y)+\ln f(1_X,e^{-\eta})=\int_X\xi\ed\mu+\int_Y\eta\ed\nu$. From this $ f(p,q)=\exp(\int_X\ln p\ed\mu-\int_Y\ln q\ed\nu)$. We used only the multiplicative property of $f$ but not the additive property, thus this part of the proof works for either real or tropical valued elements of the spectrum. Consider now $f(1_X+1_X,1_Y+1_Y)=f(1_X,1_Y)+f(1_X,1_Y)$. In the real case this translates to $f(1_X+1_X,1_Y+1_Y)=2$, in the tropical case to $f(1_X+1_X,1_Y+1_Y)=1$. Then $t\mapsto \ln f(e^t1_X,e^t1_Y)$ is additive, normalized and monotone, therefore satisfies Cauchy's functional equation and admits the form $\ln f(e^t1_X,e^t1_Y)=t$ in the real case and the form $\ln f(e^t1_X,e^t1_Y)=0$ in the tropical case. This leads to $1=\ln f(e1_X,e1_Y)=\mu(X)-\nu(Y)$ in the real case and $0=\ln f(e1_X,e1_Y)=\mu(X)-\nu(Y)$ in the tropical case. This further shows that elements of the spectrum are homogeneous of degree $1$ in the real case and homogeneous of degree $0$ in the tropical case.
 
 Now from additivity any real or tropical valued element of the spectrum admits the forms \eqref{eq:realmonotones} and \eqref{eq:tropicalmonotones}.
 We fully exploited additivity and multiplicativity, we further need to impose monotonocity under relative submajorization on multidimensional pairs of families. Elements of the spectrum need to be monotone under relative submajorization, so in particular relative majorization. Functions of the form \eqref{eq:realmonotones} and \eqref{eq:tropicalmonotones} are monotone decreasing under increase of any of the $q_i$ or decrease of any of the $p_i$. So these functions are elements of the spectrum if and only if they are monotone decreasing under relative majorization, ie. under stochastic maps, classical channels, that is we further require \eqref{eq:realmonotones} and \eqref{eq:tropicalmonotones} to satisfy the data-processing inequality.
\end{proof}

\begin{lemma}
\label{lem:jointconvexity}
Let $f$ be an additive function from $\familysemiring{X}{Y}$ into either the real or tropical numbers. Then
\begin{enumerate}
    \item if $f$ is homogeneous of degree $1$ and $f$ goes into the real numbers, then it satisfies the data-processing inequality if and only if it is jointly convex;
    \item if $f$ is homogeneous of degree $0$ and $f$ goes into the tropical numbers, then it satisfies the data-processing inequality if and only if it is jointly quasi-convex.
\end{enumerate}
\end{lemma}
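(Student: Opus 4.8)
The plan is to prove the two directions of each equivalence separately, exploiting the fact that data processing encompasses two special kinds of operations: (a) applying a channel (completely positive trace-preserving, or in the classical restriction stochastic) map to both components simultaneously, and (b) embedding several pairs as block-diagonal summands and then ``mixing'' them. The key observation is that joint convexity (resp.\ joint quasi-convexity) is exactly what one extracts by combining a direct sum with a suitable averaging channel.

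First I would prove that data processing implies joint (quasi-)convexity. Given pairs $(\rho_1,\sigma_1),\dots,(\rho_n,\sigma_n)$ and weights $\lambda_i\ge 0$ with $\sum_i\lambda_i=1$, form the block-diagonal pair $(\bigoplus_i \lambda_i\rho_i,\bigoplus_i \lambda_i\sigma_i)$; by additivity and homogeneity of degree $1$, $f$ of this equals $\sum_i\lambda_i f(\rho_i,\sigma_i)$ in the real case, while by additivity and homogeneity of degree $0$ it equals $\max_i f(\rho_i,\sigma_i)$ in the tropical case. Now apply the partial-trace-type channel $T$ that sums the blocks, i.e.\ $T(\bigoplus_i A_i)=\sum_i A_i$; this is trace preserving (hence certainly trace-nonincreasing) and completely positive, and it maps the block pair to $(\sum_i\lambda_i\rho_i,\sum_i\lambda_i\sigma_i)$. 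Data processing gives $f(\bigoplus_i\lambda_i\rho_i,\bigoplus_i\lambda_i\sigma_i)\ge f(\sum_i\lambda_i\rho_i,\sum_i\lambda_i\sigma_i)$, which is precisely joint convexity in the real case and joint quasi-convexity ($f(\sum_i\lambda_i\rho_i,\sum_i\lambda_i\sigma_i)\le\max_i f(\rho_i,\sigma_i)$) in the tropical case. One should check that the channel $T$ here is genuinely of the required form on $\boundeds$ of the direct-sum space and that the reference pair $(\sigma,\sigma')$ (second component) inequality goes the right way — it does, since equality holds.

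Conversely I would show joint (quasi-)convexity plus additivity implies data processing. Let $T:\boundeds(\mathcal{H})\to\boundeds(\mathcal{H}')$ be completely positive and trace-nonincreasing with $T(\rho)\ge\rho'$, $T(\sigma)\le\sigma'$; we want $f(\rho,\sigma)\ge f(\rho',\sigma')$. The standard trick is Stinespring/Kraus: write $T(\cdot)=\sum_k K_k(\cdot)K_k^*$ with $\sum_k K_k^*K_k\le I$, complete to a channel on a larger space by adjoining one extra Kraus-type block absorbing the defect, and realize this as a composition of an isometric embedding, a unitary conjugation, and a partial trace over an ancilla. Isometric embeddings and unitary conjugations leave $f$ invariant (they are reversible, or one checks directly from the integral forms / from additivity and the definition); a partial trace is an average of unitary conjugations (the standard ``twirl'' $\frac1{|G|}\sum_{g} (I\otimes U_g)(\cdot)(I\otimes U_g)^*$ giving $\Tr_2\otimes(\text{maximally mixed})$, then post-composed with a further embedding), so joint (quasi-)convexity controls it. Finally the two ordering constraints $T(\rho)\ge\rho'$ and $T(\sigma)\le\sigma'$ are handled by monotonicity of $f$ in each argument separately (increasing in the first, decreasing in the second), which follows from the explicit forms in Proposition~\ref{prop:integralform} in the classical case, or more generally should be recorded as part of the data-processing requirement; one passes from $(T(\rho),T(\sigma))$ to $(\rho',\sigma')$ using $\rho'\le T(\rho)$ and $\sigma'\ge T(\sigma)$.

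The main obstacle I anticipate is the reduction of a general trace-\emph{nonincreasing} CP map to a genuine channel between direct-sum spaces in a way compatible with all the structure: one must pad with an extra summand to restore trace preservation, then argue that $f$ of the padded pair is not smaller than $f$ of the original (using, in the real case, that adding a nonnegative summand can only increase $f$, and in the tropical case that the new summand, built from the defect operator, can be arranged not to dominate). Getting the direction of every inequality right through the chain embedding $\to$ twirl $\to$ partial trace $\to$ projection onto the relevant block, while simultaneously tracking the two opposing constraints on the $\rho$- and $\sigma$-components, is the delicate bookkeeping; everything else is routine once the decomposition of a substochastic/CP-trace-nonincreasing map into these primitives is fixed.
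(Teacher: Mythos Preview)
Your core strategy matches the paper's proof: for the forward direction you form a block-diagonal (classical-quantum) pair and apply the partial-trace channel, and for the converse you use Stinespring dilation and write the partial trace (tensored with the maximally mixed state) as a uniform mixture of unitary conjugations, then invoke joint (quasi-)convexity. That is exactly what the paper does.

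However, you have misread what ``data processing inequality'' means in this lemma. In the paper (cf.\ the last sentence of Proposition~\ref{prop:integralform} and the opening line of the paper's own proof, ``Suppose $f$ is monotone under quantum channels''), DPI here means monotonicity under completely positive \emph{trace-preserving} maps applied identically to both arguments, i.e.\ $f(\rho,\sigma)\ge f(T(\rho),T(\sigma))$ for every CPTP $T$. It does \emph{not} mean monotonicity under the full relative submajorization preorder of Definition~\ref{def:preorder}. Consequently your entire discussion of trace-nonincreasing maps, the inequalities $T(\rho)\ge\rho'$ and $T(\sigma)\le\sigma'$, padding with defect blocks, and separate monotonicity in each argument is unnecessary: none of it is part of the claim. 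The ``main obstacle'' you anticipate simply does not arise. Once you restrict to genuine channels, the Stinespring-plus-twirl argument goes through cleanly with no bookkeeping beyond the chain $f(\rho,\sigma)=f(V\rho V^*,V\sigma V^*)\ge f(\Phi(\rho),\Phi(\sigma))$, using unitary invariance (which follows from additivity on $0$-dimensional summands and the direct-sum structure, or directly from equivalence of pairs) and then (quasi-)convexity for the twirl.
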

\begin{proof}
Let $f$ be an additive function from $\familysemiring{X}{Y}$ into either the real or tropical numbers and let $f$ be homogeneous of degree $k$. Whenever the $\sum$ symbol is outside of $f$ let it stand as summing with respect to the semiring: usual summing in the real case and maximum in the tropical case.
Suppose $f$ is monotone under quantum channels. Applying monotonocity to $\hat{\rho}:=\sum_i p_i\ketbra{i}{i}_E\otimes\rho_i$ and $\hat{\sigma}:=\sum_i p_i\ketbra{i}{i}_E\otimes\sigma_i$ under the partial trace $\Tr_E$, where $\left(\ket{i}\right)_{i=1}^r$ is an ONS in $\mathcal{H}_E$ yields
\begin{align}
\sum_i p_i^k f\left(\rho_i,\sigma_i\right)=f\left(\sum_i p_i\ketbra{i}{i}_E\otimes\rho_i,\sum_i p_i\ketbra{i}{i}_E\otimes\sigma_i\right)\geq f\left(\sum_i p_i\rho_i,\sum_i p_i\sigma_i\right),
\end{align}
which translates to joint convexity when $k=1$, ie. in the real case and joint quasi-convexity, when $k=0$, ie in the tropical case.
Suppose now that $f$ is jointly convex in the real case or jointly quasi-convex in the tropical case.
Using Stinespring dilation $\Phi\left(.\right)=\Tr_E V\left(.\right)V^*$ with an isometry $V:\mathcal{H}\to\mathcal{K}\otimes\mathcal{H}_E$, and writing the partial trace multiplied by the maximally mixed state as a convex combination of unitary conjugations (e.g., by the discrete Weyl unitaries):
\begin{equation}
\begin{split}
f\left(\Phi\left(\rho\right),\Phi\left(\sigma\right)\right)&=\sum_{i=1}^{d_E}\frac{1}{d_E^k}f\left(\Phi\left(\rho\right),\Phi\left(\sigma\right)\right)\\
&=f\left(\frac{1}{d_E}I_E\otimes\Tr_E V\left(\rho\right)V^*,\frac{1}{d_E}I_E\otimes\Tr_E V\left(\sigma\right)V^*\right)\\
&=f\left(\sum_i\frac{1}{n} U_i V\rho V^*U_i^*,\sum_i\frac{1}{n} U_i V\sigma V^*U_i^*\right)\\
&\leq \frac{1}{n^k}\sum_{i=1}^n f\left(\rho,\sigma\right)=f\left(\rho,\sigma\right).
\end{split}
\end{equation}
\end{proof}

\begin{proposition}
\label{prop:concentrated}
If functions of the form \eqref{eq:realmonotones} and \eqref{eq:tropicalmonotones} satisfy the data processing inequality then the measure $\mu$ is concentrated on one point.
\end{proposition}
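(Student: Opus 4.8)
The plan is to feed the data-processing hypothesis into Lemma~\ref{lem:jointconvexity}, restrict attention to the first slot, and extract a second-order obstruction at the unit that forces $\mu$ to be a point mass.

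First I would note that a function $f$ of the form \eqref{eq:realmonotones} (resp.\ \eqref{eq:tropicalmonotones}) is additive and, by the constraint $\mu(X)-\nu(Y)=1$ (resp.\ $=0$) recorded in the proof of Proposition~\ref{prop:integralform}, homogeneous of degree $1$ (resp.\ $0$). The forward implication in Lemma~\ref{lem:jointconvexity} only invokes dephasings into a register followed by partial traces, which are classical channels on classical pairs, so the data-processing inequality already makes $f$ jointly convex (resp.\ jointly quasi-convex) on $\cfamilysemiring{X}{Y}$. Freezing the second argument to the constant family $1_Y$ turns $f$ into $g(p):=f(p,1_Y)=\exp\!\left(\int_X\ln p\ed\mu\right)$, which is then convex (resp.\ quasi-convex) on the convex open set $C(X)_{>0}$ of strictly positive continuous functions on $X$.

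Next I would differentiate along segments through the constant function $1_X$. For real $h\in C(X)$ and $t$ near $0$, put $\psi(t):=\int_X\ln(1+th)\ed\mu$ and $\phi(t):=g(1_X+th)=e^{\psi(t)}$; a direct computation (differentiating under the integral and using $\phi''=(\psi''+(\psi')^2)e^{\psi}$) gives $\phi'(0)=\int_X h\ed\mu$ and $\phi''(0)=\left(\int_X h\ed\mu\right)^2-\int_X h^2\ed\mu$. When $\phi'(0)=0$, both convexity (which gives $\phi''\ge 0$ throughout) and quasi-convexity (since $\phi''(0)<0$ together with $\phi'(0)=0$ would make $0$ a strict local maximum of $\phi$, impossible along a segment for a quasi-convex function) force $\phi''(0)\ge 0$, i.e.\ $\int_X h^2\ed\mu\le 0$; hence $h=0$ $\mu$-almost everywhere for every $h\in C(X)$ with $\int_X h\ed\mu=0$. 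Subtracting its $\mu$-mean, every $h\in C(X)$ is therefore $\mu$-a.e.\ constant (assuming $\mu\neq 0$; the case $\mu=0$ can occur only under the tropical normalization and is trivially of the asserted form).

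Finally I would rule out a spread-out $\mu$: if $\support\mu$ contained two distinct points $x_1,x_2$, then, $X$ being compact Hausdorff and hence normal, Urysohn's lemma would produce an $h\in C(X)$ identically $0$ on a neighbourhood of $x_1$ and identically $1$ on a neighbourhood of $x_2$; both neighbourhoods have positive $\mu$-measure since $x_1,x_2\in\support\mu$, so $h$ is not $\mu$-a.e.\ constant, contradicting the previous paragraph. Hence $\support\mu$ is a single point and $\mu$ is the corresponding scalar multiple of a Dirac measure. The only real content is the second-order computation at $1_X$ and the observation that it already prevents $\mu$ from being supported on more than one point; everything else is routine one-variable calculus and point-set topology.
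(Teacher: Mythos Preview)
Your proof is correct and follows essentially the same approach as the paper's: reduce the data-processing inequality to (quasi-)convexity via Lemma~\ref{lem:jointconvexity}, compute the first and second directional derivatives of $p\mapsto\exp\int_X\ln p\ed\mu$ at $1_X$ to obtain $\phi''(0)=(\int_X h\ed\mu)^2-\int_X h^2\ed\mu$, and then use Urysohn's lemma to manufacture a witness that contradicts the second-order condition when $\support\mu$ contains two points. The only cosmetic difference is that the paper builds the mean-zero test function directly as $\xi=(\int\xi_2\ed\mu)\xi_1-(\int\xi_1\ed\mu)\xi_2$, whereas you first conclude that every mean-zero $h$ is $\mu$-a.e.\ zero and then invoke Urysohn; these are equivalent rephrasings of the same argument.
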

\begin{proof}
By Lemma~\ref{lem:jointconvexity} we require functions of the form \eqref{eq:realmonotones} and \eqref{eq:tropicalmonotones} to be jointly convex and jointly quasi-convex. In particular both family of functions needs to be jointly quasiconvex in the one dimensional special case. These functions are totally differentiable and if we restrict $f$ to a line segment then having a zero directional derivative and negative second derivative would mean strict local maximum and would contradict quasiconvexity. Consider the general directional derivative of  $f$ at $1$. The forms of $f$ in \eqref{eq:realmonotones} and \eqref{eq:tropicalmonotones} are differentiable and the derivative of the integrands are continuous on $X$ and $Y$ and thus bounded. Then by \cite[Theorem 2.27]{folland1999real} the differentiation and the integration commute. 
\begin{equation}
    \begin{split}
        \left.\frac{\ed}{\ed s}f(1_X+s\xi,1_Y)\right|_{s=0}
         & =\left.\frac{\ed}{\ed s}\left[\exp\int_X(\ln 1_X+s\xi)\ed\mu \right]\right|_{s=0}\\
         & =\left.f(1_X+s\xi,1_Y)\left[\int_X\frac{\xi}{1_X+s\xi} \ed\mu \right]\right|_{s=0}=\int_X\xi \ed\mu 
    \end{split}
\end{equation}
and
\begin{equation}
    \begin{split}
        \left.\frac{\ed^2}{\ed s^2}f(1_X+s\xi,1_Y)\right|_{s=0}
         & =\left.\frac{\ed^2}{\ed s^2}\left[\exp\int_X(\ln 1_X+s\xi)\ed\mu \right]\right|_{s=0}\\
         & =\left.f(1_X+s\xi,1_Y)\left[\left(\int_X\frac{\xi}{1_X+s\xi} \ed\mu \right)^2-\int_X\left(\frac{\xi}{1_X+s\xi} \right)^2\ed\mu \right]\right|_{s=0}\\
         & =\left(\int_X\xi\ed\mu\right)^2-\int_X\xi^2\ed\mu.
    \end{split}
\end{equation}
To get a contradiction with quasiconvexity, we need to find a continuous function $\xi$ whose integral is zero and such that $\xi^2$ has nonzero integral, supposing that $\mu$ is not concentrated on a point. To this end let $x_1,x_2$ be distinct points in the support of $\mu$. Choose disjoint closed neighbourhoods $A_1$ and $A_2$ (possible since $X$ is a compact Hausdorff space). By Urysohn's lemma, there exist continuous functions $\xi_1,\xi_2:X\to[0,1]$ such that $\xi_1$ is $1$ on $A_1$ and $0$ on $A_2$, and $\xi_2$ is $0$ on $A_1$ and $1$ on $A_2$. Let
\begin{equation}
    \xi=\left(\int_X\xi_2\ed\mu\right)\xi_1-\left(\int_X\xi_1\ed\mu\right)\xi_2.
\end{equation}
By construction, the integral of $\xi$ vanishes, while
\begin{equation}
    \int_X\xi^2\ed\mu\ge\mu(A_1)\left(\int_X\xi_2\ed\mu\right)^2>0.
\end{equation}
\end{proof}
Let $\alpha:=\mu(X)$ and $\gamma:=\frac{\nu}{\nu(Y)}$. Taking Propositions~\ref{prop:integralform} and~\ref{prop:concentrated} into account an element of the 1-test spectrum must have one of the following forms.
\begin{align}\label{eq:realmonotonespretheorem}
f\left(\bigoplus_{i=1}^dp_i,\bigoplus_{i=1}^dq_i\right)=\sum_{i=1}^d p_i(x)^\alpha\exp\left[(1-\alpha)\int_Y\ln q_i\ed\gamma\right]
\end{align}
if $f$ goes into the reals and
\begin{align}\label{eq:tropicalmonotonespretheorem}
f\left(\bigoplus_{i=1}^dp_i,\bigoplus_{i=1}^dq_i\right)=\max_{i\in[d]} p_i(x)\exp\left[-\int_Y\ln q_i\ed\gamma\right],
\end{align}
if $f$ goes into the tropicals, where we further took into account that $f(u)=2$ is required for elements of the spectrum going into the tropical numbers. \eqref{eq:realmonotonespretheorem} is characterized by the point $x$, the weight $\alpha\geq 1$ and the probability measure $\gamma$, while \eqref{eq:tropicalmonotonespretheorem} is characterized by the point $x$ and the probability measure $\gamma$.
\begin{proposition}\label{prop:satisfyDPI}
\eqref{eq:realmonotonespretheorem} and \eqref{eq:tropicalmonotonespretheorem} satisfy the data processing inequality.
\end{proposition}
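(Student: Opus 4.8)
The plan is to reduce the statement to a finite-dimensional data-processing inequality for pairs of positive vectors by absorbing the integration over $Y$ into a ``geometric mean'' functional, and then to run the standard argument for the classical $\alpha$-R\'enyi and max-divergences. Write a classical pair in a common eigenbasis as $(\rho,\sigma)=(\bigoplus_{j=1}^d p_j,\bigoplus_{j=1}^d q_j)$ with $p_j\in C(X)$, $q_j\in C(Y)$ strictly positive, and for strictly positive $q\in C(Y)$ set $G(q):=\exp\int_Y\ln q\ed\gamma$. I would first record three elementary properties of $G$: it is monotone ($0<q_1\le q_2\Rightarrow G(q_1)\le G(q_2)$), positively homogeneous of degree one ($G(cq)=cG(q)$ for $c\ge 0$), and superadditive ($G(q_1+q_2)\ge G(q_1)+G(q_2)$). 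The first two are immediate; for the third, with $a_i:=\int_Y\ln q_i\ed\gamma$ one writes $q_1+q_2=(e^{a_1}+e^{a_2})\bigl(\lambda e^{-a_1}q_1+(1-\lambda)e^{-a_2}q_2\bigr)$ with $\lambda=e^{a_1}/(e^{a_1}+e^{a_2})$, and the second factor has vanishing $\gamma$-log-integral, so concavity of $\ln$ gives $\int_Y\ln(q_1+q_2)\ed\gamma\ge\ln(e^{a_1}+e^{a_2})$.

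Suppose now $(\rho,\sigma)\preorderge(\rho',\sigma')$ is witnessed by a completely positive trace-nonincreasing $T$, and let $(\rho',\sigma')=(\bigoplus_{k=1}^{d'}p'_k,\bigoplus_{k=1}^{d'}q'_k)$ in an eigenbasis $(e'_k)$. Reading off $t_{kj}:=\langle e'_k|T(\ketbra{e_j}{e_j})|e'_k\rangle$ gives a substochastic $d'\times d$ matrix. Evaluating $T(\rho)\ge\rho'$ at the point $x$ appearing in $f$ gives $\sum_j t_{kj}p_j(x)\ge p'_k(x)$ for all $k$. For the $\sigma$-side, $T(\sigma)\le\sigma'$ gives $\sum_j t_{kj}q_j(y)\le q'_k(y)$ for all $y$; applying $G$ and using monotonicity yields $G\bigl(\sum_j t_{kj}q_j\bigr)\le G(q'_k)$, while superadditivity and homogeneity give $G\bigl(\sum_j t_{kj}q_j\bigr)\ge\sum_j t_{kj}G(q_j)$, hence $\sum_j t_{kj}G(q_j)\le G(q'_k)$. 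Setting $P_j=p_j(x)$, $P'_k=p'_k(x)$, $Q_j=G(q_j)$, $Q'_k=G(q'_k)$ (all strictly positive, and with no row of $(t_{kj})$ vanishing, since otherwise $0=\sum_j t_{kj}p_j(x)\not\ge p'_k(x)>0$), we obtain a substochastic matrix with $\sum_j t_{kj}P_j\ge P'_k$ and $\sum_j t_{kj}Q_j\le Q'_k$. Since $f(\rho,\sigma)=\sum_i P_i^\alpha Q_i^{1-\alpha}$ and $f(\rho',\sigma')=\sum_i (P'_i)^\alpha(Q'_i)^{1-\alpha}$ in the real case (resp.\ $\max_i P_i/Q_i$ and $\max_i P'_i/Q'_i$ in the tropical case), it suffices to prove $f(\rho,\sigma)\ge f(\rho',\sigma')$ at the level of these vectors.

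For the tropical expression this is the mediant inequality: for each $k$, $P'_k/Q'_k\le\bigl(\sum_j t_{kj}P_j\bigr)/\bigl(\sum_j t_{kj}Q_j\bigr)\le\max_j P_j/Q_j$, and one takes the maximum over $k$. For the real expression, $h(p,q):=p^\alpha q^{1-\alpha}$ is, for $\alpha\ge 1$, increasing in $p$, decreasing in $q$, positively homogeneous of degree one, and jointly convex (indeed $\nabla^2 h=\alpha(\alpha-1)\,h\,vv^{\mathsf T}$ with $v=(1/p,-1/q)$); hence
\[
\sum_i (P'_i)^\alpha(Q'_i)^{1-\alpha}\le\sum_k h\Bigl(\sum_j t_{kj}P_j,\sum_j t_{kj}Q_j\Bigr)\le\sum_k\sum_j t_{kj}h(P_j,Q_j)\le\sum_j h(P_j,Q_j),
\]
using monotonicity of $h$ and the vector inequalities in the first step, homogeneity together with joint convexity applied row by row in the second, and $\sum_k t_{kj}\le 1$ with $h\ge 0$ in the third. (Equivalently, the vector-level step is the standard monotonicity of $\exp[(\alpha-1)\relativeentropy[\alpha]{P}{Q}]$ and of $\exp\maxrelativeentropy{P}{Q}$ under classical relative submajorization.) I expect the only delicate point to be keeping the inequality directions straight in the reduction step — monotonicity of $G$ is used on one side and superadditivity/homogeneity on the other — and verifying that strict positivity of the families, which relies on $\gamma$ being a probability measure so that $G$ stays strictly positive, excludes degenerate $0^{1-\alpha}$ or $0/0$ expressions.
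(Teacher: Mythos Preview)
Your proof is correct and takes a genuinely different route from the paper's argument. The paper invokes Lemma~\ref{lem:jointconvexity} to reduce the data processing inequality to joint convexity (respectively joint quasi-convexity) of the full expression, and then verifies this by computing the second directional derivative of $f(p+s\xi,q+s\eta)$ at an arbitrary point; the tropical case is obtained as the $\alpha\to\infty$ limit of $\bigl(\sum_i f(p_i,q_i)\bigr)^{1/\alpha}$. You instead isolate the $Y$-dependence into the geometric-mean functional $G(q)=\exp\int_Y\ln q\ed\gamma$, observe that $G$ is monotone, positively homogeneous, and superadditive, and use these to push the substochastic matrix $(t_{kj})$ through $G$, reducing the problem to the standard one-parameter relative submajorization of the vectors $(P_j,Q_j)=\bigl(p_j(x),G(q_j)\bigr)$. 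From there the joint convexity of $h(p,q)=p^\alpha q^{1-\alpha}$ (whose Hessian you write down) and the mediant inequality finish the real and tropical cases directly.

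What your approach buys is modularity: the superadditivity of $G$ cleanly decouples the integration over $Y$ from the Rényi-type inequality, and you prove monotonicity under relative \emph{sub}majorization directly rather than passing through the joint-convexity $\Leftrightarrow$ DPI equivalence. The paper's route, by contrast, stays within the analytic framework already set up (Lemma~\ref{lem:jointconvexity}) and treats both cases uniformly via a single derivative computation, at the cost of a somewhat longer calculation. Both arguments ultimately rest on the same convexity of $p^\alpha q^{1-\alpha}$, but yours localizes it to the scalar level while the paper carries the integral along throughout.
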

\begin{proof}
 By Lemma~\ref{lem:jointconvexity} the functions \eqref{eq:realmonotonespretheorem} are monotone under channels if and only if they are jointly convex. Since the sum of such functions also enjoy the joint convexity, it is sufficient to show joint convexitiy in the 1-dimensional case. Joint convexitiy then equivalently translates to the second directional derivative being nonnegative at any point in any direction. Let us compute the general directional second derivative relying on the commutativity of the differentiation and integration by \cite[Theorem 2.27]{folland1999real}. 
 \begin{align}
     f(p+s\xi,q+s\eta)=f(p,q)f(1_X+s\tilde{\xi},1_Y+s\tilde{\eta}),
 \end{align}
 with $\tilde{\xi}:=\frac{\xi}{p}$ and $\tilde{\eta}:=\frac{\eta}{q}$. Then as in Proposition~\ref{prop:concentrated}
\begin{equation}
    \begin{split}
    \left.\frac{\ed^2}{\ed s^2}f(p+s\xi,q+s\eta)\right|_{s=0}
    &=\left.f(p,q)\frac{\ed}{\ed s}f(1_X+s\tilde{\xi},1_Y+s\tilde{\eta})\left[\alpha\frac{\tilde{\xi}(x)}{1_X+s\tilde{\xi}(x)}+(1-\alpha)\int_Y\frac{\tilde{\eta}}{1_Y+s\tilde{\eta}} \ed\gamma \right]\right|_{s=0}\\
    &=f(p,q)\left[\left(\alpha\tilde{\xi}(x)+(1-\alpha)\int_Y\tilde{\eta} \ed\gamma\right)^2-\alpha\tilde{\xi}(x)^2-(1-\alpha)\int_Y\tilde{\eta}^2 \ed\gamma \right]  \\
    & = f(p,q)(\alpha-1)\left[\alpha\tilde{\xi}(x)^2-2\alpha\tilde{\xi}(x)\int_Y\tilde{\eta}\ed\gamma+\int_Y\tilde{\eta}^2\ed\gamma+(\alpha-1)\left(\int_Y\tilde{\eta}\ed\gamma\right)^2\right]  \\
    & \ge f(p,q)(\alpha-1)\alpha\left[\tilde{\xi}(x)^2-2\tilde{\xi}(x)\int_Y\tilde{\eta}\ed\gamma+\left(\int_Y\tilde{\eta}\ed\gamma\right)^2\right]  \\
    & = f(p,q)(\alpha-1)\alpha\left(\tilde{\xi}(x)-\int_Y\tilde{\eta}\ed\gamma\right)^2\geq 0,
    \end{split}
\end{equation}
where we used that the second moment of $\tilde{\eta}$ is greater than the square of the first moment. We conclude that the functions \eqref{eq:realmonotonespretheorem} are jointly convex and thus satisfy the data processing inequality, they are monotone decreasing under stohastic maps. Now for a function $f$ of the form \eqref{eq:realmonotonespretheorem} consider
\begin{equation}
\begin{split}
g_{\alpha}\left(\bigoplus_{i=1}^dp_i,\bigoplus_{i=1}^dq_i\right)&:=\left(\sum_{i=1}^d f\left(p_i,q_i\right)\right)^{\frac{1}{\alpha}}\\
&=\left(\sum_{i=1}^d p_i^\alpha(x)\exp\left[\left(1-\alpha\right)\int_Y\ln q_i\ed\gamma\right]\right)^{\frac{1}{\alpha}}.
\end{split}
\end{equation}
$g_\alpha$ then satisfies the data processing inequality and preserves this property in a $\alpha\to\infty$ limit. However
\begin{align}
    \lim_{\alpha\to\infty}g_{\alpha}\left(\bigoplus_{i=1}^dp_i,\bigoplus_{i=1}^dq_i\right)=\max_{i\in[d]}f\left(p_i,q_i\right)=\max_{i\in[d]} p_i(x)\exp\left[-\int_Y\ln q_i\ed\gamma\right],
\end{align}
showing that functions of the form \eqref{eq:tropicalmonotonespretheorem} satisfy the data processing inequality too.
\end{proof}
Note that functions of the form \eqref{eq:realmonotonespretheorem} can be viewed as the $\alpha$-Rényi quasidivergences of a positive vector $p$ and some pointwise geometric mean of positive vectors $q_i$. What we used in the last proof is that the max divergence, ie. functions in \eqref{eq:tropicalmonotonespretheorem}, can be given as a $\alpha\to\infty$ limit of Rényi divergences.

\begin{theorem}\label{thm:realspectrum}
$\testspectrum(\cfamilysemiring{X}{Y}$) consists of the functions
\begin{align}\label{eq:realmonotonestheorem}
f_{\alpha,x,\gamma}\left(\bigoplus_{i=1}^dp_i,\bigoplus_{i=1}^dq_i\right)=\sum_{i=1}^d p_i(x)^\alpha\exp\left[(1-\alpha)\int_Y\ln q_i\ed\gamma\right],
\end{align}
where $\alpha\geq 1$, $x\in X$ and $\gamma$ is a probablity measure on $Y$, if $f$ is real-valued and
\begin{align}\label{eq:tropicalmonotonestheorem}
f_{x,\gamma}\left(\bigoplus_{i=1}^dp_i,\bigoplus_{i=1}^dq_i\right)=\max_{i\in[d]} p_i(x)\exp\left[-\int_Y\ln q_i\ed\gamma\right],
\end{align}
where, $x\in X$ and $\gamma$ is a probability measure on $Y$, if $f$ is tropical real-valued.
\end{theorem}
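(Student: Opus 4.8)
The plan is to assemble the two inclusions; the analytic content is already supplied by Propositions~\ref{prop:integralform}, \ref{prop:concentrated} and~\ref{prop:satisfyDPI} (together with Lemma~\ref{lem:jointconvexity}), so what remains is bookkeeping and one monotonicity check.

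For the inclusion that every element of $\testspectrum(\cfamilysemiring{X}{Y})$ is of the listed form, I would take $f\in\testspectrum(\cfamilysemiring{X}{Y})$ and note that, being a monotone homomorphism, $f$ is in particular monotone under relative submajorization of one-component classical pairs, hence under stochastic maps, hence satisfies the data processing inequality. Proposition~\ref{prop:integralform} then gives the integral representation \eqref{eq:realmonotones} (real case, with $\mu(X)-\nu(Y)=1$) or \eqref{eq:tropicalmonotones} (tropical case, with $\mu(X)-\nu(Y)=0$), and Proposition~\ref{prop:concentrated} forces $\mu=\alpha\,\delta_x$ for some $x\in X$, $\alpha=\mu(X)$. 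In the real case $\alpha=1+\nu(Y)\ge 1$, and writing $\nu=(\alpha-1)\gamma$ with $\gamma$ a probability measure (immaterial if $\alpha=1$, since then $f$ is independent of $\gamma$) turns \eqref{eq:realmonotones} into \eqref{eq:realmonotonestheorem}. In the tropical case $\nu(Y)=\alpha$; evaluating on the power universal element $u=(2\cdot 1_X,1_Y)$ gives $f(u)=2^{\alpha}$, so the normalization $f(u)=2$ required for membership in $\testspectrum$ forces $\alpha=1$, whence $\nu=\gamma$ is a probability measure and \eqref{eq:tropicalmonotones} becomes \eqref{eq:tropicalmonotonestheorem}. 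This is precisely the reparametrization already recorded in \eqref{eq:realmonotonespretheorem}--\eqref{eq:tropicalmonotonespretheorem}.

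For the reverse inclusion I would fix $\alpha\ge 1$, $x\in X$ and a probability measure $\gamma$, and check that \eqref{eq:realmonotonestheorem} (resp.\ \eqref{eq:tropicalmonotonestheorem}) is a monotone semiring homomorphism into $\nonnegativereals$ (resp.\ into $\tropicals$ with $f(u)=2$). Well-definedness on equivalence classes is clear because the expression depends only on the unordered family of eigenvalue pairs; $f(0)=0$ and $f(1)=1$ are immediate; additivity is built into the formula; multiplicativity follows from $\ln(q_iq'_j)=\ln q_i+\ln q'_j$, linearity of the integral, and distributivity of $\sum$ (resp.\ $\max$) over products of nonnegative reals; and $f(u)=2$ in the tropical case is a one-line computation. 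The only substantive point is monotonicity under relative submajorization. Given classical pairs $(\rho,\sigma)\preorderge(\rho',\sigma')$ witnessed by a completely positive trace-nonincreasing $T$, I would compose $T$ on both sides with the pinchings onto the bases simultaneously diagonalizing the source and target families: the result is still completely positive and trace-nonincreasing, still satisfies $T(\rho)\ge\rho'$ and $T(\sigma)\le\sigma'$ (pinching is positive and fixes the already diagonal $\rho',\sigma'$), and is represented by a substochastic matrix, so relative submajorization of classical pairs is always implemented by a substochastic map. Since $\alpha\ge 1$, the function \eqref{eq:realmonotonestheorem} is monotone nondecreasing in each $p_i$ and nonincreasing in each $q_i$, so $f(\rho',\sigma')\le f(T(\rho),T(\sigma))$; and $f(T(\rho),T(\sigma))\le f(\rho,\sigma)$ is the data processing inequality of Proposition~\ref{prop:satisfyDPI}, extended from stochastic to substochastic maps by adjoining a ``loss'' coordinate and discarding its nonnegative contribution. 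Chaining the two inequalities gives $f(\rho',\sigma')\le f(\rho,\sigma)$, as required.

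I do not expect a serious obstacle here: the classification is genuinely a corollary of the cited propositions. The two places that need a little care are (a) extracting the normalizations --- that $\alpha\ge1$ and $\gamma$ is a probability measure in the real case, and that $\alpha=1$ is forced by $f(u)=2$ in the tropical case --- and (b) in the reverse direction, reducing an arbitrary trace-nonincreasing witness to a substochastic one by pinching and upgrading Proposition~\ref{prop:satisfyDPI} from stochastic to substochastic maps.
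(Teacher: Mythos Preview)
Your proposal is correct and follows essentially the same route as the paper, which proves the theorem in one line by citing Propositions~\ref{prop:integralform}, \ref{prop:concentrated} and~\ref{prop:satisfyDPI}. The extra work you do in the reverse inclusion --- pinching to reduce to a substochastic matrix and extending the data processing inequality from stochastic to substochastic maps via a loss coordinate --- is already absorbed into the statement of Proposition~\ref{prop:integralform} in the paper (where it is shown that integral-form functions are in the spectrum if and only if they satisfy the data processing inequality, using their coordinatewise monotonicity), so your elaboration is faithful to the paper's argument rather than a genuinely different approach.
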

\begin{proof}
Follows from Propositions~\ref{prop:integralform},~\ref{prop:concentrated} and~\ref{prop:satisfyDPI}.
\end{proof}

We will be refering to elements of $\testspectrum(\cfamilysemiring{X}{Y})$ as $f_{(\alpha),x,\gamma}$ or $g_{(\alpha),x,\gamma}$, if we want distinct multiple elements of the spectrum, signifying the characterizing quantities $x,\gamma$ and possibly $\alpha$ as well, the same time.

\subsection{Quantum extensions}\label{sec:quantum}

\begin{proposition}\label{prop:singelepointquantum}
Suppose that $\tilde{f}$ is an element of the spectrum. Let $f_{(\alpha),x,\gamma}$ be $\tilde{f}$ constrained on the classical semiring according to \eqref{eq:realmonotonestheorem} or \eqref{eq:tropicalmonotonestheorem} in Theorem~\ref{thm:realspectrum}. Then for any $\rho,\rho':X\to \positivedefiniteoperators{\mathcal{H}}$ such that $\rho(x)=\rho'(x)$ and for any $\sigma:Y\to\positivedefiniteoperators{\mathcal{H}}$ it follows that $\tilde{f}(\rho,\sigma)=\tilde{f}(\rho',\sigma)$.
\end{proposition}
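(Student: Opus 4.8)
The plan is to compare $(\rho,\sigma)$ with the pair $(\bar\rho,\sigma)$, where $\bar\rho:X\to\positivedefiniteoperators{\mathcal{H}}$ denotes the constant family with value $\tau:=\rho(x)$, the point $x$ being the one attached to the classical restriction $f_{(\alpha),x,\gamma}$. It suffices to prove $\tilde f(\rho,\sigma)=\tilde f(\bar\rho,\sigma)$ for every continuous $\rho$: if $\rho,\rho':X\to\positivedefiniteoperators{\mathcal{H}}$ satisfy $\rho(x)=\rho'(x)$ then the associated constant families coincide, whence $\tilde f(\rho,\sigma)=\tilde f(\bar\rho,\sigma)=\tilde f(\rho',\sigma)$. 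Since the argument uses only that $\tilde f$ is a monotone homomorphism together with the stated form of its classical restriction, it will apply verbatim in the real and the tropical case.

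To compare $\rho$ with $\bar\rho$ I would squeeze $\rho$ pointwise between two scalar rescalings of $\bar\rho$. Here $\tau$ is invertible, and I set
\[
p(x')=\max\{1,\lambda_{\max}(\tau^{-1/2}\rho(x')\tau^{-1/2})\},\qquad q(x')=\min\{1,\lambda_{\min}(\tau^{-1/2}\rho(x')\tau^{-1/2})\},
\]
where $\lambda_{\max},\lambda_{\min}$ are the largest and smallest eigenvalue. Continuity of $\rho$, continuity of the extreme-eigenvalue functions on Hermitian operators, and compactness of $X$ ensure $p,q\in C(X)$ and that both are bounded away from $0$ (for $q$ using positive definiteness of each $\rho(x')$), so they give one-dimensional classical pairs $(p,1_Y),(q,1_Y)\in\cfamilysemiring{X}{Y}$, with $1_Y$ the constant function $1$ on $Y$. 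Moreover $p(x)=q(x)=1$ since $\tau^{-1/2}\tau\tau^{-1/2}=I$, and by construction $q(x')\tau\le\rho(x')\le p(x')\tau$ for all $x'\in X$. Under the identification $\complexes\otimes\mathcal{H}\cong\mathcal{H}$ the semiring product $(p,1_Y)\cdot(\bar\rho,\sigma)$ equals $(p\bar\rho,\sigma)$, its first component being $x'\mapsto p(x')\tau$, and similarly for $q$; so the inequalities read $q\bar\rho\le\rho\le p\bar\rho$ pointwise.

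Now I would combine monotonicity and multiplicativity of $\tilde f$. Using the identity channel $T=\id_{\boundeds(\mathcal{H})}$ in Definition~\ref{def:preorder}, the pointwise inequalities (together with the trivial $\sigma\le\sigma$) give $(p\bar\rho,\sigma)\preorderge(\rho,\sigma)\preorderge(q\bar\rho,\sigma)$, hence $\tilde f(p\bar\rho,\sigma)\ge\tilde f(\rho,\sigma)\ge\tilde f(q\bar\rho,\sigma)$. By multiplicativity $\tilde f(p\bar\rho,\sigma)=\tilde f(p,1_Y)\,\tilde f(\bar\rho,\sigma)$, and evaluating the one-dimensional classical pair $(p,1_Y)$ by the assumed form of the classical restriction (Theorem~\ref{thm:realspectrum}) gives $\tilde f(p,1_Y)=p(x)^\alpha=1$ in the real case and $\tilde f(p,1_Y)=p(x)=1$ in the tropical case, because $\int_Y\ln 1_Y\,\ed\gamma=0$. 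The same computation applies to $q$. Therefore $\tilde f(p\bar\rho,\sigma)=\tilde f(q\bar\rho,\sigma)=\tilde f(\bar\rho,\sigma)$, the sandwich collapses to $\tilde f(\rho,\sigma)=\tilde f(\bar\rho,\sigma)$, and the proposition follows as explained in the first paragraph.

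I do not anticipate a genuine obstacle. The only points requiring a little care are verifying that $p$ and $q$ are honest strictly positive continuous functions (this is where compactness of $X$ and invertibility of each $\rho(x')$ enter) and the bookkeeping that the semiring product of the scalar pair $(p,1_Y)$ with $(\bar\rho,\sigma)$ is literally the family $x'\mapsto p(x')\rho(x)$ paired with $\sigma$, so that multiplicativity of $\tilde f$ together with the classical classification pins down the value of the scalar factor.
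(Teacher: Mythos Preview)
Your proof is correct and follows essentially the same sandwich argument as the paper: multiply by one-dimensional classical pairs that take the value $1$ at the distinguished point $x$, then use multiplicativity of $\tilde f$ together with the explicit classical formula to collapse the sandwich. The only cosmetic difference is that the paper compares $\rho$ and $\rho'$ directly via $c_1(x')=\min\{t:t\rho(x')\ge\rho'(x')\}$ and $c_2(x')=\min\{t:t\rho'(x')\ge\rho(x')\}$, whereas you route both through the constant family $\bar\rho$; this is a minor variation of the same idea.
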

\begin{proof}
Let $c_1(x)=\min\{t|t\rho(x)\ge\rho'(x)\}$ and $c_2(x)=\min\{t|t\rho'(x)\ge\rho(x)\}$. Then $c_1$ and $c_2$ are strictly positive continuous functions on $X$, and
\begin{align}
    (\rho',\sigma)\leq(c_1,1)(\rho,\sigma)\leq(c_1,1)(c_2,1)(\rho',\sigma).
\end{align}
$c_1(x)=c_2(x)=1$, $(c_1,1)$ and $(c_2,1)$ are classical pairs and thus $\tilde{f}(c_1,1)=f_{(\alpha),x,\gamma}(c_1,1)=f_{(\alpha),x,\gamma}(1,1)=1$ and $\tilde{f}=f_{(\alpha),x,\gamma}(c_2,1)=f_{(\alpha),x,\gamma}(1,1)=1$. Applying now $\tilde{f}$ to all three parts of the above inequality yields
\begin{align}
    \tilde{f}(\rho',\sigma)\leq \tilde{f}(\rho,\sigma)\leq \tilde{f}(\rho',\sigma).
\end{align}
\end{proof}

\begin{proposition}\label{prop:commutingsigma}
Let $(\rho,\sigma)\in\familysemiring{X}{Y}$ and let $\tilde{f}$ be a real element of the spectrum and let $f_{\alpha,x,\gamma}$ be its restriction onto the classical subsemiring. Let $\tilde{g}$ be a tropical element of the spectrum and let $g_{x,\gamma}$ be its restriction onto the classical subsemiring. If $[\sigma(y),\sigma(y')]=0\;\forall y,y'\in Y$  then
\begin{equation}
    \tilde{f}\left(\rho,\sigma\right) =\sandwichedquasientropy{\alpha}{\rho(x)}{\exp\int_Y \ln\sigma \ed\gamma}
\end{equation}
and
\begin{equation}
    \tilde{g}\left(\rho,\sigma\right) =\norm[\infty]{\rho^{\frac{1}{2}}(x)\left(\exp\int_Y \ln\sigma \ed\gamma\right)^{-1}\rho^{\frac{1}{2}}(x)}.
\end{equation}
\end{proposition}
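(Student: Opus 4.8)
The plan is to reduce the statement to a single operator in each slot and then run a pinching argument on tensor powers, exploiting the commutativity of $\sigma$ to make $\sigma^{\otimes n}$ scalar on a polynomial number of invariant blocks. First, by Proposition~\ref{prop:singelepointquantum} I may replace $\rho$ by the constant family with value $R:=\rho(x)$, both for $\tilde f$ and for $\tilde g$; thus I work with a constant first family throughout. Since the $\sigma(y)$ have a common eigenbasis $\{|j\rangle\}_{j=1}^d$, write $\sigma(y)=\sum_j s_j(y)\ketbra{j}{j}$ with $s_j\in C(Y,\positivereals)$ and set $S:=\exp\int_Y\ln\sigma\ed\gamma=\sum_j\bigl(\exp\int_Y\ln s_j\ed\gamma\bigr)\ketbra{j}{j}$, a positive operator commuting with every $\sigma(y)$.

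The heart of the argument is a block decomposition. Fix $n$ and split $\mathcal H^{\otimes n}=\bigoplus_P\mathcal H_P$ over the types $P$ of words in $\{1,\dots,d\}^n$ — there are $v_n=\binom{n+d-1}{d-1}=\poly(n)$ of them — where $\mathcal H_P$ is spanned by the product basis vectors of that type; let $\mathcal P$ be the pinching onto this decomposition. Then $\mathcal P$ is a channel, it fixes each $\sigma(y)^{\otimes n}$ and $S^{\otimes n}$ (diagonal, refining the decomposition), and the pinching inequality gives $R^{\otimes n}\le v_n\mathcal P(R^{\otimes n})$. The crucial point — and the only place commutativity of $\sigma$ is used — is that on $\mathcal H_P$ the operator $\sigma(y)^{\otimes n}$ equals the scalar $\phi_P(y):=\prod_j s_j(y)^{nP_j}$ times the identity; hence, writing $A_n:=\mathcal P(R^{\otimes n})=\bigoplus_P A_n^{(P)}$, the block $(A_n^{(P)},\sigma^{\otimes n}|_{\mathcal H_P})$ factors in $\familysemiring{X}{Y}$ as $(A_n^{(P)},I_{\mathcal H_P})\cdot(1_X,\phi_P)$, the second factor being a one-dimensional classical pair and $(A_n^{(P)},I_{\mathcal H_P})$ equalling, in $\familysemiring{X}{Y}$, the classical pair $\bigoplus_k(a_k^{(P)},1_Y)$ with $a_k^{(P)}$ the eigenvalues of $A_n^{(P)}$. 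Since $\tilde f$ is a homomorphism, Theorem~\ref{thm:realspectrum} gives $\tilde f(A_n^{(P)},\sigma^{\otimes n}|_{\mathcal H_P})=\Tr[(A_n^{(P)})^\alpha]\,\psi_P^{1-\alpha}$, where $\psi_P:=\exp\int_Y\ln\phi_P\ed\gamma=\prod_j (S)_{jj}^{nP_j}$ is exactly the eigenvalue of $S^{\otimes n}$ on $\mathcal H_P$; summing over $P$ and using that $A_n$ commutes with $S^{\otimes n}$ yields $\tilde f(A_n,\sigma^{\otimes n})=\Tr[A_n^\alpha (S^{\otimes n})^{1-\alpha}]=\sandwichedquasientropy{\alpha}{A_n}{S^{\otimes n}}$.

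Now I squeeze. From $(R,\sigma)^{\otimes n}\preorderge(A_n,\sigma^{\otimes n})$ (apply the channel $\mathcal P$), $(v_nA_n,\sigma^{\otimes n})\preorderge(R,\sigma)^{\otimes n}$ (apply the identity, using the pinching inequality), $\tilde f(v_nA_n,\sigma^{\otimes n})=v_n^\alpha\tilde f(A_n,\sigma^{\otimes n})$ (peel off the classical factor $(v_n,1)$), and multiplicativity/monotonicity of $\tilde f$, one gets $\sandwichedquasientropy{\alpha}{A_n}{S^{\otimes n}}\le\tilde f(\rho,\sigma)^n\le v_n^\alpha\sandwichedquasientropy{\alpha}{A_n}{S^{\otimes n}}$. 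On the other hand, since $\mathcal P$ commutes with conjugation by powers of $S^{\otimes n}$, $\sandwichedquasientropy{\alpha}{A_n}{S^{\otimes n}}=\Tr[(\mathcal P(X_n))^\alpha]$ with $X_n:=(S^{\otimes n})^{\frac{1-\alpha}{2\alpha}}R^{\otimes n}(S^{\otimes n})^{\frac{1-\alpha}{2\alpha}}$; the pinching inequality for $X_n$ together with monotonicity and, since $\alpha\ge1$, convexity of $B\mapsto\Tr B^\alpha$ gives $v_n^{-\alpha}\sandwichedquasientropy{\alpha}{R}{S}^{\,n}\le\sandwichedquasientropy{\alpha}{A_n}{S^{\otimes n}}\le\sandwichedquasientropy{\alpha}{R}{S}^{\,n}$, using additivity $\sandwichedquasientropy{\alpha}{R^{\otimes n}}{S^{\otimes n}}=\sandwichedquasientropy{\alpha}{R}{S}^{\,n}$. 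Combining and taking $n$-th roots, $v_n^{\pm\alpha/n}\to1$ forces $\tilde f(\rho,\sigma)=\sandwichedquasientropy{\alpha}{\rho(x)}{S}$. The tropical identity follows from the identical steps with $\Tr[\,\cdot\,^\alpha]$ replaced throughout by $\norm[\infty]{\cdot}$ (so $\tilde g(A_n^{(P)},I_{\mathcal H_P})=\max_k a_k^{(P)}$, $\tilde g(1_X,\phi_P)=\psi_P^{-1}$, and $\tilde g(A_n,\sigma^{\otimes n})=\norm[\infty]{(S^{\otimes n})^{-1/2}A_n(S^{\otimes n})^{-1/2}}$), using contractivity and monotonicity of pinching for the operator norm, and gives $\tilde g(\rho,\sigma)=\norm[\infty]{S^{-1/2}\rho(x)S^{-1/2}}=\norm[\infty]{\rho^{1/2}(x)S^{-1}\rho^{1/2}(x)}$.

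I expect the only genuine difficulty to be arriving at the block step: the type-pinching that simultaneously diagonalizes all $\sigma(y)^{\otimes n}$ and, block by block, collapses $\sigma^{\otimes n}$ to a scalar family, so that what remains is — up to a harmless one-dimensional classical factor — a pair of constant families, hence a classical pair after a unitary. This is exactly where commutativity of $\sigma$ enters and cannot be dispensed with, since a non-commuting family admits no common ``type'' refinement, which is why only implicitly defined monotones survive in the general quantum case. Everything else (the pinching inequality with $\poly(n)$ multiplicity, additivity and multiplicativity of $\tilde f$ and $\tilde g$, and convexity of $\Tr[\,\cdot\,^\alpha]$ and of $\norm[\infty]{\cdot}$) is routine.
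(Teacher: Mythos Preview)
Your proof is correct and follows essentially the same pinching-on-tensor-powers strategy as the paper: sandwich $\tilde f(\rho,\sigma)^n$ between the classical value on the pinched pair and a $\poly(n)$ multiple of it, then take $n$-th roots. The differences are cosmetic. The paper pinches by $\tilde\sigma^{\otimes n}$ for a reference operator $\tilde\sigma$ diagonal in the common eigenbasis (implicitly with multiplicatively independent eigenvalues, so that its eigenspaces are precisely your type blocks), whereas you pinch by types directly; and the paper appeals to \cite[Proposition~4.12]{tomamichel2015quantum} for the limit $\sandwichedquasientropy{\alpha}{\mathcal P(R^{\otimes n})}{S^{\otimes n}}^{1/n}\to\sandwichedquasientropy{\alpha}{R}{S}$, whereas you supply this limit by hand via the pinching inequality and convexity of $\Tr[\,\cdot\,]^\alpha$ (resp.\ contractivity of $\norm[\infty]{\cdot}$). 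Your explicit block factorization $(A_n^{(P)},\phi_P I)=(A_n^{(P)},I)\cdot(1_X,\phi_P)$ makes the reduction to the classical subsemiring slightly more transparent than the paper's one-line assertion that ``after the pinching all these operators commute'', but the content is the same.
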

\begin{proof}
There is a positive definite operator $\tilde{\sigma}$ such that the eigenbasis of $\tilde{\sigma}$ simultaneously diagonalizes all $\sigma(y)$. Let $\mathcal{P}_{\tilde{\sigma}_n}$ denote the pinching by $\tilde{\sigma}^{\otimes n}$, then $\mathcal{P}_{\tilde{\sigma}_n}$ leaves $\sigma^{\otimes n}(y)$ invariant for all $y\in Y$. It follows that
\begin{equation}
\begin{split}
\left(\rho,\sigma\right)^{n}&\preorderge\left(\mathcal{P}_{\tilde{\sigma}_n}\left(\rho^{\otimes n}\right),\mathcal{P}_{\tilde{\sigma}_n}\left(\sigma^{\otimes n}\right)\right)\\
&=\left(\mathcal{P}_{\tilde{\sigma}_n}\left(\rho^{\otimes n}\right),\sigma^{\otimes n}\right)\\
&\preorderge\left(\frac{1}{\poly(n)},1\right)\left(\rho^{\otimes n},\sigma^{\otimes n}\right),
\end{split}
\end{equation}
where $\poly(n)$ is a polynomial of $n$ and we used that any pinching is a completely positive trace preserving map and the pinching inequality:
\begin{equation}
\rho^{\otimes n}\leq |\spectrum(\tilde{\sigma}^{\otimes n})|\mathcal{P}_{\tilde{\sigma}_n}\left(\rho^{\otimes n}\right)=\poly(n)\mathcal{P}_{\tilde{\sigma}_n}\left(\rho^{\otimes n}\right).
\end{equation}
We have shown in Proposition~\ref{prop:singelepointquantum} that $\tilde{f}$ and $\tilde{g}$ only depend on one point of $\rho$ apart from $\sigma$, but after the pinching all these operators commute and thus we are evaluating $\tilde{f}$ and $\tilde{g}$ on the classical subsemiring, where $\tilde{f}$ and $\tilde{g}$ are determined by Theorem~\ref{thm:realspectrum}. Applying $f$ to all three parts and taking the $n$-th root yields
\begin{equation}
\begin{split}
\tilde{f}\left(\rho,\sigma\right)&\geq \sqrt[n]{f_{\alpha,x,\gamma}\left(\mathcal{P}_{\tilde{\sigma}_n}\left(\rho^{\otimes n}(x)\right),\sigma^{\otimes n}\right)}\\
&=\sqrt[n]{\Tr \left(\mathcal{P}_{\tilde{\sigma}_n}\left(\rho^{\otimes n}\right)\right)^\alpha\left(\exp\int_Y \ln\sigma^{\otimes n} \ed\gamma\right)^{1-\alpha}}\\
&=\sqrt[n]{\Tr \left(\mathcal{P}_{\tilde{\sigma}_n}\left(\rho^{\otimes n}(x)\right)\right)^\alpha\left(\left(\exp\int_Y \ln\sigma \ed\gamma\right)^{\otimes n}\right)^{1-\alpha}}\\
&\geq \sqrt[n]{\frac{1}{\left(\poly(n)\right)^\alpha}} \tilde{f}\left(\rho,\sigma\right).
\end{split}
\end{equation}
Taking the limit $n\to +\infty$ gives us
\begin{equation}
\tilde{f}\left(\rho(x),\sigma\right)\geq \sandwichedquasientropy{\alpha}{\rho}{\exp\int_Y \ln\sigma \ed\gamma}\geq \tilde{f}\left(\rho,\sigma\right),
\end{equation}
where we refer to \cite[Proposition 4.12.]{tomamichel2015quantum} (see also \cite[Theorem 4.4.]{perry2020semiring}) in taking the limit of the middle term.

Now applying $\tilde{g}$ to all three parts yields
\begin{equation}
\begin{split}
\tilde{g}\left(\rho,\sigma\right)&\geq \sqrt[n]{g_{x,\gamma}\left(\mathcal{P}_{\tilde{\sigma}_n}\left(\rho^{\otimes n}(x)\right),\sigma^{\otimes n}\right)}\\
&=\sqrt[n]{\norm[\infty]{\left(\mathcal{P}_{\tilde{\sigma}_n}\left(\rho^{\otimes n}(x)\right)\right)\left(\exp\int_Y \ln\sigma^{\otimes n} \ed\gamma\right)}}\\
&=\sqrt[n]{\norm[\infty]{\left(\mathcal{P}_{\tilde{\sigma}_n}\left(\rho^{\otimes n}(x)\right)\right)\left(\left(\exp\int_Y \ln\sigma \ed\gamma\right)^{\otimes n}\right)}}\\
&\geq \sqrt[n]{\frac{1}{\poly(n)}} \tilde{g}\left(\rho,\sigma\right).
\end{split}
\end{equation}
Taking the limit $n\to +\infty$ gives us
\begin{equation}
\tilde{g}\left(\rho,\sigma\right)\geq \norm[\infty]{\rho^{\frac{1}{2}}(x)\left(\exp\int_Y \ln\sigma \ed\gamma\right)^{-1}\rho^{\frac{1}{2}}(x)}\geq \tilde{g}\left(\rho,\sigma\right),
\end{equation}
where we refer to \cite{datta2009min} and \cite[Section 4.2.4]{tomamichel2015quantum} in taking the limit of the middle term.
\end{proof}

The expression $\exp\int_Y\ln\sigma\ed\gamma$ can be viewed as a continuous analogue of a weighted geometric mean of positive numbers. The form of the spectrum elements in the case of commuting $\sigma$ suggests looking for fully quantum generalizations of the form $f(\rho,\sigma)=\sandwichedquasientropy{\alpha}{\rho(x)}{M(\sigma)}$, where $\alpha\ge 1$, $x\in X$ and $M$ is some noncommutative version of the weighted geometric mean. In the following definition we make the requirements more precise and also more flexible by allowing the result to be also a continuous family of positive operators. The advantages of this formulation are that a simple composition property conveniently allows for the construction of many examples, and that these objects also give rise to monotone homomorphisms between different semirings $\familysemiring{X}{Y}\to\familysemiring{X'}{Y'}$. We equip $C(Y,\positivedefiniteoperators{\mathcal{H}})$ with the pointwise semidefinite partial order.
\begin{definition}\label{def:mean}
Let $Y$ and $Y'$ be nonempty compact spaces. A family of continuous geometric means indexed by $Y'$ is a collection of maps $M:C(Y,\positivedefiniteoperators{\mathcal{H}})\to C(Y',\positivedefiniteoperators{\mathcal{H}})$ which are unitary equivariant (i.e. if $\sigma'=U\sigma U^*$ for some unitary $U:\mathcal{H}\to\mathcal{H}'$, then $M(\sigma')=UM(\sigma)U^*$ satisfying the following properties for all $\sigma\in C(Y,\positivedefiniteoperators{\mathcal{H}})$, $\sigma'\in C(Y,\positivedefiniteoperators{\mathcal{H}'})$ and $\lambda\in\positivereals$:
\begin{enumerate}
    \item $M(\sigma\oplus\sigma')=M(\sigma)\oplus M(\sigma')$,
    \item $M(\sigma\otimes\sigma')=M(\sigma)\otimes M(\sigma')$,
    \item $M(\lambda \sigma)=\lambda M(\sigma)$,
    \item if $\sigma\le\sigma'$, then $M(\sigma)\le M(\sigma')$,
    \item $M$ is concave.
\end{enumerate}
The set of families of geometric means is denoted by $\gmeans(Y,Y')$. When $Y'$ is a one-point space, we identify $C(Y',\positivedefiniteoperators{\mathcal{H}})$ with $\positivedefiniteoperators{\mathcal{H}}$ and write $\gmeans(Y)$ instead of $\gmeans(Y,Y')$.
\end{definition}
Because of unitary equivariance, it is sufficient to specify a family of means for families of operators on $\complexes^d$ for all $d$.

We note that the properties of families of geometric means that we consider imply that they can be extended to positive semidefinite operators (by $\lim_{\epsilon\to 0}M(\sigma+\epsilon 1_Y\otimes I_\mathcal{H})$), and that they are increasing under completely positive trace-preserving maps in the sense that if $M\in\gmeans(Y,Y')$, $\sigma\in C(Y,\positivedefiniteoperators{\mathcal{H}})$ and $T:\boundeds(\mathcal{H})\to\boundeds(\mathcal{H}')$ is a completely positive trace-preserving map, then $M(T(\sigma))\ge T(M(\sigma))$. To see this, consider the Stinespring dilation of $T$, and write the partial trace over the environment, followed by tensoring with the maximally mixed state, as a convex combination of unitary conjugations. On numbers ($\mathcal{H}=\complexes$) every $M\in\gmeans(Y)$ has the form $M(\sigma)=\exp\int_Y\ln\sigma\ed\gamma$ for some probability measure $\gamma$.

An example of an element of $\gmeans(\{1,2\})$ is given by 
\begin{equation}
    \sigma(1)\#\sigma(2)=\sigma(1)^{1/2}(\sigma(1)^{-1/2}\sigma(2)\sigma(1)^{-1/2})^{1/2}\sigma(1)^{1/2},
\end{equation}
the (unweighted) geometric mean of a pair of matrices, introduced in \cite{pusz1975functional} and put in a general context by Kubo and Ando \cite{kubo1980means}. Extensions to several variables have been constructed by building on the bivariate geometric mean or generalizing characterizations thereof (see e.g. \cite{moakher2005differential,petz2005means,bhatia2006riemannian,lawson2011monotonic,lim2012matrix}), and also studied from an axiomatic point of view \cite{ando2004geometric}. Examples of elements of $\gmeans\{1,2,\ldots,n\}$ are the Karcher means \cite{lawson2011monotonic,lim2012matrix}. Since our axioms in Definition~\ref{def:mean} are directly motivated by their use in constructing monotone homomorphisms, they differ from the ones considered in the literature on geometric means of matrices, in particular in their emphasis on relating the means of matrices of different sizes (by the tensor product or the direct sum). In addition, we need to consider every possible weighting of the arguments, therefore symmetry is not a relevant property in our problem.

The following proposition lists basic constructions that allows one to exhibit many elements of $\gmeans(Y)$. Geometric means that can be obtained in this way include the Ando--Li--Mathias mean \cite{ando2004geometric} and the Bini--Meini--Polini means \cite{bini2010effective}.
\begin{proposition}\label{prop:meanconstructions}
Let $Y,Y',Y''$ be nonempty compact spaces.
\begin{enumerate}
    \item\label{it:meancomposition} If $M\in\gmeans(Y,Y')$ and $N\in\gmeans(Y',Y'')$, then $N\circ M\in\gmeans(Y,Y'')$ (here the composition $N\circ M$ is understood separately for all $\mathcal{H}$).
    \item\label{it:meancontinuousmap} If $f:Y'\to Y$ is a continuous map, then $M(\sigma)=\sigma\circ f$ defines an element of $\gmeans(Y,Y')$.
    \item\label{it:meanbivariate} $M(\sigma)=\sigma'$ with
    \begin{equation}
        \sigma'(y_1,y_2,\gamma):=\sigma(y_1)\#_{\gamma}\sigma(y_2)=\sigma(y_1)^{1/2}\left(\sigma(y_1)^{-1/2}\sigma(y_2)\sigma(y_1)^{-1/2}\right)^\gamma\sigma(y_1)^{1/2}
    \end{equation}
    defines an element of $\gmeans(Y,Y\times Y\times[0,1])$.
    \item\label{it:meancompact} $\gmeans(Y)$ is compact with respect to the pointwise convergence (i.e. convergence of $i\mapsto M_i(\sigma)$ for all $\sigma$).
\end{enumerate}
\end{proposition}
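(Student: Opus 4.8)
The plan is to verify, for each of the four constructions, the conditions of Definition~\ref{def:mean}: unitary equivariance, the direct-sum and tensor-product identities, positive homogeneity, monotonicity under the semidefinite order, concavity, and continuity of the output family. The first two constructions are essentially formal. For the composition $N\circ M$: unitary equivariance follows by applying equivariance of $M$ and then of $N$; each of the identities $N(M(\sigma\oplus\sigma'))=N(M(\sigma))\oplus N(M(\sigma'))$, $N(M(\sigma\otimes\sigma'))=N(M(\sigma))\otimes N(M(\sigma'))$ and $N(M(\lambda\sigma))=\lambda N(M(\sigma))$ follows by first invoking the corresponding identity for $M$ (legitimate since $M(\sigma)\in C(Y',\positivedefiniteoperators{\mathcal{H}})$, the domain of $N$) and then for $N$; monotonicity is a composition of monotone maps; and concavity follows by chaining concavity of $M$, monotonicity of $N$ applied to the resulting operator inequality, and concavity of $N$. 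For the pullback $M(\sigma)=\sigma\circ f$, every requirement reduces to a pointwise identity — $(U\sigma U^*)\circ f=U(\sigma\circ f)U^*$, $(\sigma\oplus\sigma')\circ f=(\sigma\circ f)\oplus(\sigma'\circ f)$, $(\sigma\otimes\sigma')\circ f=(\sigma\circ f)\otimes(\sigma'\circ f)$, $(\lambda\sigma)\circ f=\lambda(\sigma\circ f)$ — and $M$ is in fact affine, hence both monotone and concave; continuity of $\sigma\circ f$ is clear.

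The third construction is the one with genuine content. I would set $m_\gamma(A,B):=A\#_\gamma B=A^{1/2}(A^{-1/2}BA^{-1/2})^\gamma A^{1/2}$, which is the Kubo--Ando operator mean corresponding to the operator monotone function $t\mapsto t^\gamma$ on $[0,\infty)$ for $\gamma\in[0,1]$, and then read off the axioms from its standard properties \cite{kubo1980means,ando2004geometric}. Joint monotonicity of $m_\gamma$ yields property~(iv) of the definition; joint concavity of $m_\gamma$ yields property~(v); congruence invariance $Cm_\gamma(A,B)C^*=m_\gamma(CAC^*,CBC^*)$ for invertible $C$, specialized to unitaries, yields unitary equivariance; the homogeneity $m_\gamma(\lambda A,\lambda B)=\lambda\,m_\gamma(A,B)$ yields property~(iii); and the block identities $m_\gamma(A\oplus A',B\oplus B')=m_\gamma(A,B)\oplus m_\gamma(A',B')$ and $m_\gamma(A\otimes A',B\otimes B')=m_\gamma(A,B)\otimes m_\gamma(A',B')$ follow by substituting into the defining formula and using that the functional calculi $t\mapsto t^{1/2},t^{-1/2},t^\gamma$ commute with $\oplus$ and $\otimes$ on positive operators. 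Finally, $(y_1,y_2,\gamma)\mapsto m_\gamma(\sigma(y_1),\sigma(y_2))$ is a continuous family of positive definite operators because $A\mapsto A^{\pm1/2}$ is continuous on positive definite operators and $(X,\gamma)\mapsto X^\gamma=\exp(\gamma\ln X)$ is jointly continuous, while $\sigma$ is continuous on the compact set $Y\times Y\times[0,1]$. The only inputs that are not routine bookkeeping are the Kubo--Ando joint monotonicity and joint concavity of $\#_\gamma$, which I would simply cite; this is where I would locate the (mild) main obstacle.

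For the compactness of $\gmeans(Y)$ the plan is a Tychonoff-plus-closedness argument. By unitary equivariance an element $M\in\gmeans(Y)$ is determined by its values on families over $\complexes^d$ for all $d\in\naturals$, so $\gmeans(Y)$ embeds into the product $P:=\prod_{d}\prod_{\sigma\in C(Y,\positivedefiniteoperators{\complexes^d})}\boundeds(\complexes^d)_{\mathrm{sa}}$, and the topology of pointwise convergence is precisely the subspace topology from $P$. The crucial observation is that each coordinate is confined to a compact set independent of $M$: by the remark preceding the proposition, $M$ sends the constant scalar family $1_Y$ to $1$, hence $M(I_Y)=I_{\complexes^d}$ on every $\complexes^d$ by property~(i); then for any $\sigma$ with $a_\sigma I\le\sigma(y)\le b_\sigma I$ for all $y$ — such $0<a_\sigma\le b_\sigma$ exist since $Y$ is compact and $\sigma$ is continuous — properties~(iii) and~(iv) force $a_\sigma I\le M(\sigma)\le b_\sigma I$, so $M(\sigma)$ lies in the compact set $K_\sigma:=\{A:a_\sigma I\le A\le b_\sigma I\}\subseteq\positivedefiniteoperators{\complexes^d}$. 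Thus $\gmeans(Y)\hookrightarrow\prod_\sigma K_\sigma$, which is compact by Tychonoff's theorem. It then remains to check that $\gmeans(Y)$ is closed in this product: the unitary-equivariance identity, properties~(i)--(iii), the implication in property~(iv), and the concavity inequality in property~(v) are all closed conditions involving finitely many of the coordinates, and positive-definiteness of a limiting $M(\sigma)$ is preserved thanks to the uniform lower bound $M(\sigma)\ge a_\sigma I>0$. A closed subset of a compact space is compact. I expect no serious difficulty here beyond the uniform two-sided bound $a_\sigma I\le M(\sigma)\le b_\sigma I$, which relies on the normalization $M(I_Y)=I$ and hence on the classification of $\gmeans(Y)$ over scalars.
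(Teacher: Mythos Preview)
Your proposal is correct and follows essentially the same approach as the paper: parts~\ref{it:meancomposition} and~\ref{it:meancontinuousmap} are treated as formal verifications, part~\ref{it:meanbivariate} is reduced to the Kubo--Ando properties of $\#_\gamma$ (cited rather than reproved), and part~\ref{it:meancompact} is handled by the same Tychonoff-plus-closed-conditions argument, with the key two-sided bound $a_\sigma I\le M(\sigma)\le b_\sigma I$ obtained from the normalization on scalars together with the direct-sum and monotonicity axioms. One small slip: in part~\ref{it:meancontinuousmap} you write that $M$ is ``affine, hence both monotone and concave''---affinity gives concavity but not monotonicity; the latter holds here simply because the pullback preserves pointwise semidefinite inequalities, and the paper lists it separately for this reason.
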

\begin{proof}
\ref{it:meancomposition}: The composition is clearly additive, multiplicative, homogeneous and monotone. For concavity, we apply $N$ to the inequality $M(\lambda\sigma+(1-\lambda)\sigma')\ge\lambda M(\sigma)+(1-\lambda)M(\sigma')$, which expresses the concavity of $M$, using that $N$ is monotone:
\begin{equation}
\begin{split}
    (N\circ M)(\lambda\sigma+(1-\lambda)\sigma')
        & = N(M(\lambda\sigma+(1-\lambda)\sigma'))  \\
        & \ge N(\lambda M(\sigma)+(1-\lambda)M(\sigma'))  \\
        & \ge \lambda N(M(\sigma))+(1-\lambda)N(M(\sigma')).
\end{split}
\end{equation}

\ref{it:meancontinuousmap}: $M$ is clearly additive, multiplicative, homogeneous, monotone and affine (hence concave).

\ref{it:meanbivariate}: $\sigma'$ is clearly continuous for every continuous $\sigma$. The geometric mean is clearly additive, multiplicative and homogeneous. For concavity and monotonocity see \cite{kubo1980means} and \cite[Theorem 37.1]{simon2019loewner}.

\ref{it:meancompact}: If $\sigma\in C(Y,\positivedefiniteoperators{\mathcal{H}})$, then there exist constants $c_1,c_2>0$ such that for all $y\in Y$ the inequalities $c_1I\le\sigma(y)\le c_2I$ hold. By the direct sum and monotonicity properties, it follows that $c_1I\le M(\sigma)\le c_2I$ for every $M\in\gmeans(Y)$. The interval $[c_1I,c_2I]=\setbuild{A\in\positivedefiniteoperators{\complexes^d}}{c_1I\le A\le c_2I}$ is compact for every $d$, therefore the evaluations embed $\gmeans(Y)$ into the compact space $\prod_{d\in\naturals}\prod_{\sigma\in C(Y,\positivedefiniteoperators{\complexes^d})}[c_1(\sigma)I,c_2(\sigma)I]$. The conditions defining $\gmeans$ are closed (equalities and non-strict inequalities with respect to the semidefinite partial order), therefore the image under the embedding is closed.
\end{proof}

\begin{proposition}\label{prop:homomorphismfrommean}
Let $X,Y,X',Y'$ be nonempty compact spaces, $M\in\gmeans(Y,Y')$ and $f:X'\to X$ continuous. Then the map $(\rho,\sigma)\mapsto(\rho\circ f,M(\sigma))$ is a monotone semiring homomorphism
\end{proposition}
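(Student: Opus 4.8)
The plan is to verify directly that the map $\Phi:(\rho,\sigma)\mapsto(\rho\circ f,M(\sigma))$ respects the semiring structure and is monotone under relative submajorization. First I would check that $\Phi$ is a semiring homomorphism. Additivity and multiplicativity of $\rho\mapsto\rho\circ f$ are immediate from the pointwise definition of $\oplus$ and $\otimes$ on families over $X'$, and $M$ is additive and multiplicative by properties (i) and (ii) of Definition~\ref{def:mean}. The zero element (the pair over the zero-dimensional Hilbert space) and the unit $1=(I_\complexes,I_\complexes)$ are clearly preserved: $I_\complexes\circ f$ is the constant function $I$ on $X'$, and $M$ applied to the constant $I_\complexes$-valued family is $I_\complexes$ by homogeneity (property (iii)) together with the fact that on $\mathcal{H}=\complexes$ every $M$ has the form $\exp\int_Y\ln\sigma\ed\gamma$ with $\gamma$ a probability measure, so $M(1_Y)=1$. (Alternatively one gets $M(I)=I$ on any $\mathcal{H}$ from $M(I\oplus I)=M(I)\oplus M(I)$ and $M(I\otimes I)=M(I)\otimes M(I)$, but the scalar case suffices for the unit.) Equivariance of $\Phi$ under unitaries is inherited from the unitary equivariance of $M$, so $\Phi$ is well-defined on equivalence classes.

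Next I would establish monotonicity. Suppose $(\rho,\sigma)\preorderge(\rho',\sigma')$ in $\familysemiring{X}{Y}$, witnessed by a completely positive trace-nonincreasing map $T:\boundeds(\mathcal{H})\to\boundeds(\mathcal{H}')$ with $T(\rho(x))\ge\rho'(x)$ for all $x\in X$ and $T(\sigma(y))\le\sigma'(y)$ for all $y\in Y$. The $\rho$-component is easy: $T(\rho\circ f)(x')=T(\rho(f(x')))\ge\rho'(f(x'))=(\rho'\circ f)(x')$ for every $x'\in X'$, so the same $T$ still overestimates on the first slot. For the $\sigma$-component I need $T(M(\sigma))\le M(\sigma')$. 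Here the key point is the remark following Definition~\ref{def:mean}: a family of geometric means is \emph{increasing under completely positive trace-preserving maps}, i.e. $M(S(\sigma))\ge S(M(\sigma))$ for CPTP $S$. A genuinely trace-nonincreasing $T$ is not quite CPTP, so I would first complete $T$ to a CPTP map on a one-larger space—append a one-dimensional block and route the lost trace there, as in the construction of $\tilde T$ in the proof that $\familysemiring{X}{Y}$ is a preordered semiring—apply the CPTP monotonicity, and then project back; alternatively, note that the proof of the CPTP monotonicity (Stinespring dilation, then writing partial trace composed with tensoring the maximally mixed state as a convex average of unitary conjugations, using concavity (v) and unitary equivariance) goes through verbatim for subchannels. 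Either way one obtains $T(M(\sigma))\le M(T(\sigma))$, and then monotonicity of $M$ in its argument (property (iv)) together with $T(\sigma(y))\le\sigma'(y)$ pointwise gives $M(T(\sigma))\le M(\sigma')$. Chaining these yields $T(M(\sigma))\le M(\sigma')$, so the same $T$ witnesses $(\rho\circ f,M(\sigma))\preorderge(\rho'\circ f,M(\sigma'))$ in $\familysemiring{X'}{Y'}$.

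The main obstacle is the $\sigma$-component monotonicity, specifically the gap between "trace-preserving" (where the increasing property is stated) and "trace-nonincreasing" (what Definition~\ref{def:preorder} allows). I expect the cleanest fix is the block-completion trick: set $\hat T\big(\begin{smallmatrix}A&B\\ C&D\end{smallmatrix}\big)=\begin{smallmatrix}T(A)&0\\ 0&(\Tr A-\Tr T(A))\,\oplus D\end{smallmatrix}$ on $\boundeds(\mathcal{H}\oplus\complexes)$, extend $\sigma$ by a suitable constant on the new block so that $M$ still acts blockwise via property (i), apply the CPTP statement to $\hat T$, and restrict. Everything else—additivity, multiplicativity, preservation of $0$ and $1$, well-definedness on equivalence classes, and the $\rho$-slot inequality—is routine and follows immediately from the defining properties of $\gmeans(Y,Y')$ and the functoriality of precomposition with $f$. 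Hence $\Phi$ is a monotone semiring homomorphism $\familysemiring{X}{Y}\to\familysemiring{X'}{Y'}$.
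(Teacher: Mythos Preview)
Your proposal is correct and follows essentially the same route as the paper: verify additivity and multiplicativity from the axioms on $M$, then reuse the witnessing map $T$ and the chain $T(M(\sigma))\le M(T(\sigma))\le M(\sigma')$. You are in fact more careful than the paper, which simply invokes ``monotonicity of $M$ under completely positive trace-nonincreasing maps'' even though the remark after Definition~\ref{def:mean} only states it for trace-preserving maps; your block-completion trick is the right way to close that gap (the claim that the Stinespring argument goes through \emph{verbatim} for subchannels is a bit quick, since the dilation then uses a contraction rather than an isometry---but the completion step handles that as well).
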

\begin{proof}
This map is by definition additive and multiplicative. We have yet to show monotonocity. Suppose that the completely positive trace-nonincreasing map $T$ realizes  $(\rho,\sigma)\geq (\rho',\sigma')$. Then $T\left(\rho\right)\geq \rho'$ and $T(\sigma)\leq \sigma'$. From monotonocity of $M$ in its variables under completely positive trace-nonincreasing maps:
\begin{align}
T(M(\sigma))\leq M(T(\sigma))\leq M(\sigma').
\end{align}
This yields $(\rho,M(\sigma))\geq (\rho',M(\sigma'))$ by the same map $T$.
\end{proof}

\begin{theorem}\label{thm:quantumspectrum}
Let $X,Y$ be nonempty compact spaces. For all $\alpha\ge 1$, $x\in X$ and $M\in\gmeans(Y)$ the functional
\begin{equation}\label{eq:realpointfrommean}
    f(\rho,\sigma)=\sandwichedquasientropy{\alpha}{\rho(x)}{M(\sigma)}
\end{equation}
is an element of the real part of the spectrum, and
\begin{equation}\label{eq:tropicalpointfrommean}
    f(\rho,\sigma)=\norm[\infty]{M(\sigma)^{-1/2}\rho(x)M(\sigma)^{-1/2}}
\end{equation}
is an element of the tropical part.
\end{theorem}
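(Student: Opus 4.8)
The plan is to verify the three required properties of a monotone semiring homomorphism into $\nonnegativereals$ (respectively $\tropicals$)—namely additivity, multiplicativity, and monotonicity—together with the correct normalization on the power universal element $u=(2,1)$, and to do so by reducing as much as possible to facts already established: the definition of $\gmeans(Y)$ (Definition~\ref{def:mean}), the composition/specialization result Proposition~\ref{prop:homomorphismfrommean}, and the known properties of the sandwiched R\'enyi quasi-entropy $\sandwichedquasientropy{\alpha}{\cdot}{\cdot}$ and the max-relative quantity $\norm[\infty]{M(\sigma)^{-1/2}\rho(x)M(\sigma)^{-1/2}}$.

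First I would handle the single-point evaluation: fix $x\in X$ and observe that $(\rho,\sigma)\mapsto(\rho(x),\sigma)$ is the specialization homomorphism $\familysemiring{X}{Y}\to\familysemiring{\{*\}}{Y}$ coming from Proposition~\ref{prop:homomorphismfrommean} with $f:\{*\}\to X$ the inclusion of $x$ (and $M=\id$). Likewise $(\rho',\sigma)\mapsto(\rho',M(\sigma))$ for $M\in\gmeans(Y)$ is the homomorphism of Proposition~\ref{prop:homomorphismfrommean} targeting $\familysemiring{\{*\}}{\{*\}}$. Composing these two, the problem reduces to showing that on $\familysemiring{\{*\}}{\{*\}}$—i.e. on pairs of single positive operators $(\rho,\sigma)$ with $\rho,\sigma\in\positivedefiniteoperators{\mathcal{H}}$—the assignments $(\rho,\sigma)\mapsto\sandwichedquasientropy{\alpha}{\rho}{\sigma}$ and $(\rho,\sigma)\mapsto\norm[\infty]{\sigma^{-1/2}\rho\sigma^{-1/2}}=2^{\maxrelativeentropy{}{\rho}{\sigma}}$ are monotone semiring homomorphisms into $\nonnegativereals$ and $\tropicals$ respectively. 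Additivity under $\oplus$ and multiplicativity under $\otimes$ for these are the standard additivity/multiplicativity of the sandwiched R\'enyi divergence under direct sums and tensor products (and the analogous block-diagonal and tensor behaviour of $\maxrelativeentropy{}{\cdot}{\cdot}$); these are textbook facts (e.g.~\cite{tomamichel2015quantum}). The value on $u=(2,1)$ is $\sandwichedquasientropy{\alpha}{2I_\complexes}{I_\complexes}=2$ in the real case and $\norm[\infty]{2}=2$ in the tropical case, matching the required normalization $f(u)=2$; on $1=(I_\complexes,I_\complexes)$ both give $1$, and on $0$ both give $0$.

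Monotonicity is the substantive point. Suppose a completely positive trace-nonincreasing $T$ witnesses $(\rho,\sigma)\preorderge(\rho',\sigma')$ at the level of $\familysemiring{\{*\}}{\{*\}}$, so $T(\rho)\ge\rho'$ and $T(\sigma)\le\sigma'$. I want $\sandwichedquasientropy{\alpha}{\rho}{\sigma}\ge\sandwichedquasientropy{\alpha}{\rho'}{\sigma'}$. Since $\alpha\ge 1$, $\sandwichedquasientropy{\alpha}{\cdot}{\cdot}$ is monotone increasing in the first argument and monotone decreasing in the second (the latter following from operator monotonicity/antitonicity properties used already in the paper for the commuting case), so $\sandwichedquasientropy{\alpha}{\rho'}{\sigma'}\le\sandwichedquasientropy{\alpha}{T(\rho)}{T(\sigma)}$; and then the data processing inequality for the sandwiched R\'enyi quasi-entropy under the CPTP-extension of $T$ (complete positivity plus trace-nonincrease, after the standard normalization trick of extending a trace-nonincreasing map to a trace-preserving one on a one-dimensional larger space) gives $\sandwichedquasientropy{\alpha}{T(\rho)}{T(\sigma)}\le\sandwichedquasientropy{\alpha}{\rho}{\sigma}$. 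The tropical case is identical with $\maxrelativeentropy{}{\cdot}{\cdot}$ in place of $\thesandwicheddivergence_\alpha$, using that $\maxrelativeentropy{}{\cdot}{\cdot}$ obeys DPI and has the right monotonicity in each slot. I expect the one genuinely delicate bookkeeping step to be handling the trace-\emph{nonincreasing} (as opposed to trace-preserving) nature of $T$ and the fact that $\rho,\sigma$ need not be normalized states: the quasi-entropy $\sandwichedquasientropy{\alpha}{\rho}{\sigma}=\Tr\left(\sigma^{(1-\alpha)/(2\alpha)}\rho\,\sigma^{(1-\alpha)/(2\alpha)}\right)^{\alpha}$ is homogeneous of degree $1$ in $\rho$ and degree $1-\alpha$ in $\sigma$, so one must check that shrinking $\Tr\rho$ only decreases it and that the DPI in the non-normalized, trace-nonincreasing setting still holds—this is where I would invoke the extended-channel argument most carefully, and it is the part of the proof I would write out in full rather than cite.
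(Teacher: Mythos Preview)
Your approach is essentially the same as the paper's: both factor the functional through the monotone semiring homomorphism $\familysemiring{X}{Y}\to\familysemiring{\{*\}}{\{*\}}$, $(\rho,\sigma)\mapsto(\rho(x),M(\sigma))$, supplied by Proposition~\ref{prop:homomorphismfrommean}, and then use that $(\rho,\sigma)\mapsto\sandwichedquasientropy{\alpha}{\rho}{\sigma}$ and $(\rho,\sigma)\mapsto\norm[\infty]{\sigma^{-1/2}\rho\sigma^{-1/2}}$ are spectrum elements of $\familysemiring{\{*\}}{\{*\}}$. The only difference is how that last fact is justified: the paper appeals to Proposition~\ref{prop:commutingsigma} (trivially applicable when $Y$ is a single point) together with \cite{perry2020semiring,bunth2021asymptotic}, whereas you verify additivity, multiplicativity and monotonicity directly. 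Your direct verification is sound---the argument-wise monotonicity of $\sandwichedquasientropy{\alpha}{\cdot}{\cdot}$ for $\alpha\ge 1$ and the DPI for completely positive trace-nonincreasing maps via the standard trace-preserving extension go through as you sketch. One small slip: $\sandwichedquasientropy{\alpha}{2I_{\complexes}}{I_{\complexes}}=2^{\alpha}$, not $2$; but this is harmless, since the normalization $f(u)=2$ is imposed only on the tropical part of $\testspectrum$, not on real-valued homomorphisms.
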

\begin{proof}
By Proposition~\ref{prop:homomorphismfrommean}, the map $(\rho,\sigma)\mapsto(\rho(x),M(\sigma))$ determines a monotone semiring homomorphism from $\familysemiring{X}{Y}$ to $\familysemiring{1}{1}$, where $1$ is a one-point space. On $\familysemiring{1}{1}$ the functionals $f_\alpha(\rho,\sigma)=\sandwichedquasientropy{\alpha}{\rho}{\sigma}$ are in the real spectrum and $(\rho,\sigma)\mapsto\norm[\infty]{\sigma^{-1/2}\rho\sigma^{-1/2}}$ is in the tropical spectrum, as follows from Proposition~\ref{prop:commutingsigma} (see also \cite{perry2020semiring,bunth2021asymptotic}). Therefore \eqref{eq:realpointfrommean} and \eqref{eq:tropicalpointfrommean} are compositions of monotone semiring homomorphisms, which implies that they are points in the real (respectively tropical) part of the spectrum.
\end{proof}

In the special case when $X$ is a one-point space and $Y$ has two elements, the $\alpha\to 1$ limit has recently found application in composite binary state discrimination \cite{mosonyi2020error}.

While Theorem~\ref{thm:quantumspectrum} identifies a vast collection of elements of $\testspectrum(\familysemiring{X}{Y})$, it still provides an incomplete picture of the spectrum. Our results highlight several open problems:
\begin{itemize}
    \item Is every real spectral point of the form $f(\rho,\sigma)=\sandwichedquasientropy{\alpha}{\rho(x)}{P(\sigma)}$ for some map $P:C(Y,\positivedefiniteoperators{\mathcal{H}})\to\positivedefiniteoperators{\mathcal{H}}$ (defined in a consistent way for all $\mathcal{H}$)?
    \item Assuming that a real spectral point does have the form $f(\rho,\sigma)=\sandwichedquasientropy{\alpha}{\rho(x)}{P(\sigma)}$, is $P$ necessarily an element of $\gmeans(Y)$?
    \item Classify the elements of $\gmeans(Y)$.
\end{itemize}
The analogous questions for tropical points are also interesting.

\section{Applications}\label{sec:applications}

\subsection{Equivariant relative submajorization}\label{sec:equivariant}

In this section we consider pairs of operators on a representation space of some fixed group, and a variant of relative submajorization that takes into account the group actions. Let $G$ be a topological group. Let $\pi$ and $\pi'$ be finite dimensional unitary representations of $G$ on $\mathcal{H}$ and $\mathcal{H}'$, respectively. Suppose that $(\rho_0,\sigma_0)\in\positivedefiniteoperators{\mathcal{H}}^2$ and $(\rho'_0,\sigma'_0)\in\positivedefiniteoperators{\mathcal{H}'}^2$. We say that $(\pi,\rho_0,\sigma_0)$ equivariantly relatively submajorizes $(\pi',\rho'_0,\sigma'_0)$ if there exists a completely positive trace-nonincreasing map $T:\boundeds(\mathcal{H})\to\boundeds(\mathcal{H}')$ such that
\begin{align}
T(\rho_0) & \ge \rho'_0  \\
T(\sigma_0) & \le \sigma'_0  \\
\forall g\in G\forall A\in\boundeds(\mathcal{H}): T(\pi(g)A\pi(g)^*) & = \pi'(g)T(A)\pi'(g)^*
\end{align}
On these triples the direct sum and tensor product (of representations and of operators) give binary operations that are compatible with equivariant relative submajorization.

It will be convenient to restrict to compact groups $G$, and it can be done without loss of generality for the following reason. Consider the closure $K$ of $\setbuild{(\pi(g),\pi'(g))}{g\in G}\subseteq U(\mathcal{H})\times U(\mathcal{H}')$. This is a compact group (in fact, a Lie group), the map $g\mapsto(\pi(g),\pi'(g))$ is a homomorphism and the representations $\pi,\pi'$ of $G$ extend to representations of $K$ (namely the first and second projections provide the required homomorphisms). By continuity, a map $T:\boundeds(\mathcal{H})\to\boundeds(\mathcal{H'})$ is $G$-equivariant iff it is $K$-equivariant. Therefore the condition for $(\pi,\rho_0,\sigma_0)\preorderge(\pi',\rho'_0,\sigma'_0)$ can be formulated in terms of the compact group $K$ instead of $G$. Note that in this case the compact group in general depends on the specific pair of triples to be compared (through the representations), which may not always be desirable. Alternatively, one may construct $K$ in a universal way, by taking the Bohr compactification of $G$. Recall that the Bohr compactification of a topological group $G$ is a compact Hausdorff topological group $\Bohrcpt(G)$ together with a continuous homomorphism $b:G\to\Bohrcpt(G)$ that is universal in the sense that every continuous homomorphism from $G$ into a compact group factors through $b$ in a unique way. Every topological group has an essentially unique Bohr compactification. We can apply the universal property to the homomorphisms $\pi:G\to U(\mathcal{H})$ to get a representation $\Bohrcpt(\pi):\Bohrcpt(G)\to U(\mathcal{H})$. Thus, instead of each triple $(\pi,\rho_0,\sigma_0)$ we may consider the modified triple $(\Bohrcpt(\pi),\rho_0,\sigma_0)$. For notational simplicity, from now on we will assume that $G$ itself is a compact Hausdorff group.

We now show how to map the triples $(\pi,\rho_0,\sigma_0)$ to pairs of families in such a way that the operations are preserved and equivariant relative submajorization translates to the relative submajorization of the families. In this way a triple $(\pi,\rho_0,\sigma_0)$ gives rise to the following pair of families, parametrized by $G$:
\begin{align}
    \rho(g) & = \pi(g)\rho_0\pi(g)^*  \\
    \sigma(g) & = \pi(g)\sigma_0\pi(g)^*.
\end{align}
$(\rho,\sigma)$ determines an element of $\familysemiring{G}{G}$, and this element remains the same if we replace the triple $(\pi,\rho_0,\sigma_0)$ by a unitary equivalent one. This map clearly respects the sum and product operations.

Suppose that $(\pi,\rho_0,\sigma_0)$ equivariantly relatively submajorizes $(\pi',\rho'_0,\sigma'_0)$, and let $T$ be an equivariant completely positive trace-nonincreasing map satisfying $T(\rho_0)\ge\rho'_0$ and $T(\sigma_0)\le\sigma'_0$. Consider the corresponding elements $(\sigma,\rho)$ and $(\sigma',\rho')$ of $\familysemiring{G}{G}$. Then for all $g\in G$ the inequality
\begin{equation}
    T(\rho(g))=T(\pi(g)\rho_0\pi(g)^*)=\pi'(g)T(\rho_0)\pi'(g)^*\ge\pi'(g)\rho'_0\pi'(g)^*=\rho'(g)
\end{equation}
holds and similarly $T(\sigma(g))\le\sigma'(g)$. This means that $(\rho,\sigma)\preorderge(\rho',\sigma')$ holds.

Conversely, suppose that $(\rho,\sigma)\preorderge(\rho',\sigma')$ is true in $\familysemiring{G}{G}$ for the families defined above. This means that there exists a (not necessarily equivariant) completely positive trace-nonincreasing map $T_0$ such that for all $g\in G$ the inequalities $T_0(\rho(g))\ge\rho'(g)$ and $T_0(\sigma(g))\le\sigma'(g)$ hold. We construct an equivariant map $T$ by averaging:
\begin{equation}
    T(X)=\int_G\pi'(g)^*T_0(\pi(g)X\pi(g)^*)\pi'(g)\ed\mu(g),
\end{equation}
where $\mu$ is the Haar probability measure on $G$. Then $T$ is $G$-equivariant and in addition
\begin{equation}
\begin{split}
T(\rho_0)
 & = \int_G\pi'(g)^*T_0(\pi(g)\rho_0\pi(g)^*)\pi'(g)\ed\mu(g)  \\
 & = \int_G\pi'(g)^*T_0(\rho(g))\pi'(g)\ed\mu(g)  \\
 & \ge \int_G\pi'(g)^*\rho'(g)\pi'(g)\ed\mu(g)=\rho'_0,
\end{split}
\end{equation}
and similarly $T(\sigma_0)\le\sigma'_0$.

Note that even though the map $(\pi,\rho_0,\sigma_0)\mapsto(\rho,\sigma)$ is order-preserving and order-reflecting, it is in general not injective (on equivalence classes). Now we can apply our result on general pairs of families to the question of asymptotic equivariant relative submajorization.
\begin{theorem}\label{thm:equivariantasymptotic}
Let $G$ be a topological group and consider the triples $(\pi,\rho_0,\sigma_0)$ and $(\pi',\rho'_0,\sigma'_0)$, where $\pi$ is a unitary representation of $G$ on $\mathcal{H}$, $\rho_0,\sigma_0$ are positive definite operators on $\mathcal{H}$ and similarly for $\pi',\rho'_0,\sigma'_0$ on $\mathcal{H}'$. The following are equivalent:
\begin{enumerate}
    \item there exists a sequence of $G$-equivariant completely positive trace-nonincreasing maps $T_n:\boundeds(\mathcal{H})\to\boundeds(\mathcal{H}')$ such that for all $x\in X$ the inequalities
    \begin{align}
        T_n(\rho_0^{\otimes n}) & \ge 2^{-o(n)}{\rho'_0}^{\otimes n}  \\
        T_n(\sigma_0^{\otimes n}) & \le {\sigma'_0}^{\otimes n}
    \end{align}
    hold, with the $o(n)$ uniform in $x$,
    \item $f((\pi(g)\rho_0\pi(g)^*)_{g\in G},(\pi(g)\sigma_0\pi(g)^*)_{g\in G})\ge f((\pi(g)\rho'_0\pi(g)^*)_{g\in G},(\pi(g)\sigma'_0\pi(g)^*)_{g\in G})$ for all $f\in\testspectrum(\familysemiring{G}{G})$.
\end{enumerate}
\end{theorem}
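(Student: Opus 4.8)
The plan is to reduce Theorem~\ref{thm:equivariantasymptotic} to the already-established machinery for $\familysemiring{G}{G}$, so the proof is essentially a bookkeeping exercise translating between equivariant triples and pairs of families. First I would recall the correspondence constructed in the paragraphs immediately preceding the theorem: a triple $(\pi,\rho_0,\sigma_0)$ is sent to the pair of families $\rho(g)=\pi(g)\rho_0\pi(g)^*$, $\sigma(g)=\pi(g)\sigma_0\pi(g)^*$, and this assignment is additive, multiplicative, and both order-preserving and order-reflecting for the respective preorders (the forward direction by applying $T$ pointwise, the reverse by the Haar-averaging construction). In particular it is also compatible with tensor powers: the $n$-fold tensor power of the triple $(\pi,\rho_0,\sigma_0)$ corresponds to the pair $(\rho^{\otimes n},\sigma^{\otimes n})$ in $\familysemiring{G}{G}$, because $\pi^{\otimes n}(g)\rho_0^{\otimes n}\pi^{\otimes n}(g)^* = \rho(g)^{\otimes n}$ (here we use the diagonal action of $G$, which is exactly what the power of the family encodes).

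Next I would identify condition~(i) of the theorem with the asymptotic preorder $(\rho,\sigma)\asymptoticge(\rho',\sigma')$ in $\familysemiring{G}{G}$. Since the power universal element is $u=(2,1)$, the relation $u^{k_n}(\rho,\sigma)^n\preorderge(\rho',\sigma')^n$ with $k_n=o(n)$ unwinds to the existence of completely positive trace-nonincreasing maps $S_n$ on $\familysemiring{G}{G}$ with $S_n(2^{k_n}\rho^{\otimes n})\ge{\rho'}^{\otimes n}$ and $S_n(\sigma^{\otimes n})\le{\sigma'}^{\otimes n}$ pointwise in $g$; rescaling $S_n$ by $2^{-k_n}$ and then averaging over the Haar measure exactly as in the displayed computation before the theorem produces $G$-equivariant maps $T_n$ on $\boundeds(\mathcal{H})$ with $T_n(\rho_0^{\otimes n})\ge 2^{-k_n}{\rho'_0}^{\otimes n}$ and $T_n(\sigma_0^{\otimes n})\le{\sigma'_0}^{\otimes n}$, which is condition~(i); and conversely, given such $T_n$, restricting to the families and absorbing the $2^{-o(n)}$ factor into a power of $u$ gives the asymptotic inequality in the semiring. (One should note here that the index set in the statement is written as $X$ but should be read as $G$, and "uniform in $x$" means the $o(n)$ is a single sequence not depending on $g$; this is automatic since the maps $T_n$ do not depend on $g$.)

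Finally, having shown that condition~(i) is equivalent to $(\rho,\sigma)\asymptoticge(\rho',\sigma')$ in $\familysemiring{G}{G}$, I would invoke Corollary~\ref{cor:nonstrictasymptotic}, which applies because $\familysemiring{G}{G}$ is a preordered semiring of polynomial growth with $0\preorderle 1$: the asymptotic preorder holds if and only if $f(\rho,\sigma)\ge f(\rho',\sigma')$ for every $f$ in the 1-test spectrum $\testspectrum(\familysemiring{G}{G})$. Spelling out what $(\rho,\sigma)$ and $(\rho',\sigma')$ are gives condition~(ii) verbatim. The main obstacle, such as it is, is purely organizational: making sure the $o(n)$ correction factor is matched precisely to powers of the power universal element $u=(2,1)$ (so that a $2^{-o(n)}$ slack on the $\rho$-side and no slack on the $\sigma$-side corresponds exactly to $u^{o(n)}$ on the left of $\preorderge$), and checking that Haar averaging preserves complete positivity and the trace-nonincreasing property while interacting correctly with the pointwise inequalities — all of which is routine given the computations already carried out in the surrounding text. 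There is no genuinely hard analytic or algebraic step; the content has been front-loaded into Corollary~\ref{cor:nonstrictasymptotic} and the $\familysemiring{G}{G}$ formalism.
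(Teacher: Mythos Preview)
Your proposal is correct and matches the paper's approach exactly: the paper does not give a separate proof of this theorem but presents it as an immediate consequence of the order-preserving/order-reflecting correspondence $(\pi,\rho_0,\sigma_0)\mapsto(\rho,\sigma)$ established in the preceding paragraphs together with Corollary~\ref{cor:nonstrictasymptotic}. One small wording issue: rather than ``rescaling $S_n$ by $2^{-k_n}$'' (which would spoil the $\sigma$-inequality), you simply use linearity of $S_n$ to rewrite $S_n(2^{k_n}\rho^{\otimes n})\ge{\rho'}^{\otimes n}$ as $S_n(\rho^{\otimes n})\ge 2^{-k_n}{\rho'}^{\otimes n}$ before averaging.
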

We note that in general many elements of $\testspectrum(\familysemiring{G}{G})$ collapse to the same function when restricted to pairs of the form $((\pi(g)\rho_0\pi(g)^*)_{g\in G},(\pi(g)\sigma_0\pi(g)^*)_{g\in G})$. The reason is that left translations of $G$ give rise to automorphisms of $\familysemiring{G}{G}$ of the form $(\rho,\sigma)\mapsto(\rho\circ L_h,\sigma\circ L_h)$ (where $h\in G$ and $L_h:G\to G$ is the map $L_h(g)=hg$), which in turn induce nontrivial automorphisms of $\testspectrum(\familysemiring{G}{G})$, while the equivalence class of $((\pi(g)\rho_0\pi(g)^*)_{g\in G},(\pi(g)\sigma_0\pi(g)^*)_{g\in G})$ remains unchanged by these transformations. This can be seen explicitly on the subsemiring of pairs with commuting $\sigma$, where the precise form of spectral points is known: if $\rho(g)=\pi(g)\rho_0\pi(g)^*$ and $\sigma(g)=\pi(g)\sigma_0\pi(g)^*$ such that $\sigma(g)\sigma(e)=\sigma(e)\sigma(g)$ for all $g\in G$, then
\begin{equation}
    \begin{split}
        f_{\alpha,x,\gamma}(\rho\circ L_h,\sigma\circ L_h)
         & = \sandwichedquasientropy{\alpha}{\rho(hx)}{\exp\int_G\ln\sigma(hg)\ed\gamma(g)}  \\
         & = \sandwichedquasientropy{\alpha}{\pi(h)\rho(x)\pi(h)^*}{\pi(h)\exp\int_G\ln\sigma(g)\ed\gamma(g)\pi(h)^*}  \\
         & = \sandwichedquasientropy{\alpha}{\rho(x)}{\exp\int_G\ln\sigma(g)\ed\gamma(g)}  \\
         & = f_{\alpha,x,\gamma}(\rho,\sigma).
    \end{split}
\end{equation}
The first line of this calculation also shows that
\begin{equation}
    f_{\alpha,x,\gamma}(\rho\circ L_h,\sigma\circ L_h)=f_{\alpha,hx,(L_h)_*(\gamma)}(\rho,\sigma).
\end{equation}
In particular, $f_{\alpha,h,\gamma}$ and $f_{\alpha,e,(L_{h^{-1}})_*(\gamma)}$ coincide on these elements.
\begin{corollary}\label{cor:equivariantasymptoticcommuting}
Under the conditions of Theorem~\ref{thm:equivariantasymptotic}, suppose that $[\sigma(e),\sigma(g)]=0$ for all $g\in G$. Then $(\pi,\rho_0,\sigma_0)\asymptoticge(\pi',\rho'_0,\sigma'_0)$ (in the sense of asymptotic equivariant relative submajorization) iff for all $\alpha\ge 0$ and Radon probability measure $\gamma$ on $G$ the inequality
\begin{equation}
    \sandwicheddivergence{\alpha}{\rho_0}{\exp\int_G\ln\pi(g)\sigma_0\pi(g)^*\ed\gamma(g)}\ge\sandwicheddivergence{\alpha}{\rho'_0}{\exp\int_G\ln\pi(g)\sigma'_0\pi(g)^*\ed\gamma(g)}
\end{equation}
holds.
\end{corollary}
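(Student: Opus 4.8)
\emph{Proof strategy.}
The plan is to feed Theorem~\ref{thm:equivariantasymptotic} into the explicit description of spectral points on pairs with commuting second component, and then to use the $G$-symmetry of the families attached to a triple to eliminate the point parameter.

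As in the paragraphs preceding the corollary, attach to the triples the pairs of continuous families $\rho,\sigma,\rho',\sigma'$ on $G$ with $\rho(g)=\pi(g)\rho_0\pi(g)^{*}$, $\sigma(g)=\pi(g)\sigma_0\pi(g)^{*}$, and likewise for the primed data; by Theorem~\ref{thm:equivariantasymptotic} combined with Corollary~\ref{cor:nonstrictasymptotic}, asymptotic equivariant relative submajorization is equivalent to $\tilde f(\rho,\sigma)\ge\tilde f(\rho',\sigma')$ holding for every $\tilde f\in\testspectrum(\familysemiring{G}{G})$. The elementary point is that the commutativity assumption propagates along the group: conjugating by $\pi(g)^{*}$ gives $\pi(g)^{*}[\sigma(g),\sigma(h)]\pi(g)=[\sigma_0,\pi(g^{-1}h)\sigma_0\pi(g^{-1}h)^{*}]=[\sigma(e),\sigma(g^{-1}h)]$, so $[\sigma(e),\sigma(g)]=0$ for all $g$ forces $[\sigma(g),\sigma(h)]=0$ for all $g,h$, and the same holds for $\sigma'$ (which we take to satisfy the analogous hypothesis; in the thermal-operations application both $\sigma_0,\sigma'_0$ are Gibbs states commuting with the Hamiltonian, so this is automatic). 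Thus both pairs lie in the subsemiring on which the second component is a commuting family, where Proposition~\ref{prop:commutingsigma} is available.

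Now I would apply Proposition~\ref{prop:commutingsigma} together with Theorems~\ref{thm:realspectrum} and~\ref{thm:quantumspectrum} to convert the spectral condition into an explicit family of inequalities. A real spectral point $\tilde f$ whose classical restriction is $f_{\alpha,x,\gamma}$ (necessarily with $\alpha\ge 1$, by Theorem~\ref{thm:realspectrum}) evaluates on $(\rho,\sigma)$ to $\sandwichedquasientropy{\alpha}{\rho(x)}{\exp\int_G\ln\sigma\ed\gamma}$, and on $(\rho',\sigma')$ to the same expression with primes; conversely Theorem~\ref{thm:quantumspectrum} shows that every triple $(\alpha,x,\gamma)$ with $\alpha\ge 1$ is realized by some $\tilde f$. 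The tropical points are handled the same way and give the operator-norm expression $\norm[\infty]{(\exp\int_G\ln\sigma\ed\gamma)^{-1/2}\rho(x)(\exp\int_G\ln\sigma\ed\gamma)^{-1/2}}$. Hence the spectral condition becomes exactly: for all $\alpha\ge 1$, all $x\in G$ and all Radon probability measures $\gamma$ on $G$,
\[
\sandwichedquasientropy{\alpha}{\rho(x)}{\exp\int_G\ln\sigma\ed\gamma}\ge\sandwichedquasientropy{\alpha}{\rho'(x)}{\exp\int_G\ln\sigma'\ed\gamma},
\]
together with the corresponding operator-norm inequalities. Next, the left translations $L_h\colon g\mapsto hg$ induce semiring automorphisms $(\rho,\sigma)\mapsto(\rho\circ L_h,\sigma\circ L_h)$ of $\familysemiring{G}{G}$; since $\rho\circ L_h=\pi(h)\rho\pi(h)^{*}$ pointwise (a fixed unitary conjugation), these automorphisms fix the equivalence classes of both $(\rho,\sigma)$ and $(\rho',\sigma')$, while, as computed immediately before the corollary, they carry $f_{\alpha,x,\gamma}$ to $f_{\alpha,hx,(L_h)_{*}\gamma}$. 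Taking $h=x^{-1}$ and using that $\gamma\mapsto(L_{x^{-1}})_{*}\gamma$ is a bijection of probability measures, the entire family of inequalities collapses to the subfamily with $x=e$. The operator-norm inequalities are then redundant, being the $\alpha\to\infty$ limits of the $\alpha$-th roots of the real ones (the operator norm of $M^{-1/2}\rho(e)M^{-1/2}$ is $\lim_{\alpha\to\infty}\sandwichedquasientropy{\alpha}{\rho(e)}{M}^{1/\alpha}$; cf.\ the remark after Proposition~\ref{prop:satisfyDPI} and \cite{datta2009min}). Finally, rewriting the quasi-entropy through $\sandwichedquasientropy{\alpha}{\omega}{\tau}=\Tr\omega\cdot\exp\bigl((\alpha-1)\sandwicheddivergence{\alpha}{\omega}{\tau}\bigr)$ for $\alpha>1$, and taking the $\alpha\to 1$ limit (which records the remaining content $\Tr\rho_0\ge\Tr\rho'_0$), puts the surviving conditions into the stated form with $\exp\int_G\ln\pi(g)\sigma_0\pi(g)^{*}\ed\gamma(g)$ in the second argument.

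Essentially all the difficulty is already packed into the cited results — the local--global principle underlying Theorem~\ref{thm:equivariantasymptotic}, the classification of the classical spectrum, and the pinching argument of Proposition~\ref{prop:commutingsigma} — so the only genuine content of the corollary is the commutativity-propagation step and the verification that the translation automorphisms act on spectral points exactly as stated, making the reduction to $x=e$ legitimate. The one thing that genuinely requires care is that the primed system must also be a commuting family: without that, Proposition~\ref{prop:commutingsigma} pins down $\tilde f(\rho,\sigma)$ but leaves $\tilde f(\rho',\sigma')$ uncontrolled, since it is not known whether every spectral point arises from a geometric mean as in Theorem~\ref{thm:quantumspectrum}.
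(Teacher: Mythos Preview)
Your approach is essentially the one the paper intends: the corollary is stated without a separate proof, and the argument is meant to be read off from Theorem~\ref{thm:equivariantasymptotic}, Proposition~\ref{prop:commutingsigma}, and the translation calculation in the paragraph immediately preceding the corollary. Your propagation of commutativity from $[\sigma(e),\sigma(g)]=0$ to $[\sigma(g),\sigma(h)]=0$, your reduction to $x=e$ via the left-translation automorphisms, and your absorption of the tropical conditions as $\alpha\to\infty$ limits are exactly what the surrounding text sets up.

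You are also right to flag the issue with $\sigma'$. The statement only imposes $[\sigma(e),\sigma(g)]=0$, but Proposition~\ref{prop:commutingsigma} must be applied on \emph{both} pairs to identify the spectral values on both sides of the inequality; without knowing that every $\tilde f\in\testspectrum(\familysemiring{G}{G})$ takes the explicit form on $(\rho',\sigma')$ as well, neither implication goes through. So the analogous hypothesis $[\sigma'_0,\pi'(g)\sigma'_0\pi'(g)^*]=0$ for all $g$ is tacitly required. In every application the paper makes (thermal processes in Section~\ref{sec:thermal}, where both Gibbs states are invariant under time translation; hypothesis testing, where the target triple is one-dimensional) this holds automatically, which is presumably why the omission went unnoticed, but as a standalone statement the hypothesis is incomplete and you have correctly located the reason. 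A minor further point in the same spirit: the range ``$\alpha\ge 0$'' in the displayed condition is a slip for ``$\alpha\ge 1$'', since the classification in Theorem~\ref{thm:realspectrum} only produces $\alpha\ge 1$, and that is precisely the range your argument covers.
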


\subsubsection{Asymptotic transformations by thermal processes}
\label{sec:thermal}
Thermal operations are central to the resource theoretic approach to quantum thermodynamics. This is the class of quantum channels that can be obtained by preparing Gibbs states at a fixed temperature $T$, performing energy-preserving and tracing out subsystems \cite{janzing2000thermodynamic,brandao2013resource,horodecki2013fundamental}. This characterization does not suggest a simple way to decide whether a given channel is a thermal operation or whether a transformation between given states is feasible by a thermal operation, which motivates the study of channels and transformations admitting a simpler description at the cost of satisfying only some of the constraints governing thermal operations. In the absence of coherence between energy eigenspaces, Gibbs-preserving maps provide an especially useful relaxation, which turns out to allow the same transitions as thermal operations. This is no longer true if coherence is present \cite{faist2015gibbs}, and in addition to being Gibbs-preserving, the condition of time-translation symmetry has been identified as another key property of thermal operations \cite{lostaglio2015description}. Adding this requirement leads to the notion of thermal processes \cite{gour2018quantum}. Transformations by such processes are an instance of equivariant relative majorization: the group is that of time-translations, isomorphic to $\reals$, and to a system with Hilbert space $\mathcal{H}$, Hamiltonian $H\in\boundeds(\mathcal{H})$ and state $\rho$ we associate the triple $(\pi_H,\rho,e^{-\beta H})$, where $\pi_H:\reals\to U(\mathcal{H})$ is the representation $t\mapsto e^{-itH}$ and $\beta$ is the inverse temperature. If we relax these transformations to equivariant relative \emph{sub}majorization and consider the asymptotic limit, then the problem reduces to evaluating the elements of $\testspectrum(\familysemiring{\Bohrcpt(\reals)}{\Bohrcpt(\reals)})$ on the pairs corresponding to the initial and target triple. While this might look difficult at first sight, the task is greatly simplified by the fact that $\pi_H(t)e^{-\beta H}\pi_H(t)^*$ is constant. This means that the second argument of the spectral points is commuting (so we may use Corollary~\ref{cor:equivariantasymptoticcommuting}), and since we need to integrate a constant function, the expressions do not depend on the probability measure $\gamma$. This means that the spectral points essentially reduce to $\sandwichedquasientropy{\alpha}{\rho}{e^{-\beta H}}$ as in \cite{perry2020semiring}, implying that in this limit Gibbs-preserving maps are no more powerful than thermal processes.

\subsubsection{Hypothesis testing with group symmetry}
The task in asymptotic binary state discrimination is to decide, based on measurements on many copies, if the state of a system is $\rho_0$ or $\sigma_0$, with the promise that it is one of the two. A type I error occurs if $\rho_0$ is accepted but the state in fact was $\sigma_0$, the opposite case is called the type II error. In the asymptotic setting, one is interested in the restrictions on the limiting behaviours of the two kinds of errors. In the group-symmetric variant of this problem, the measurements are restricted to be invariant with respect to a group representation $\pi:G\to U(\mathcal{H})$ \cite{hiai2009quantum}. An invariant measurement on $\mathcal{H}^{\otimes n}$ can be identified with an equivariant (completely) positive trace-nonincreasing map $T:\boundeds(\mathcal{H}^{\otimes n})\to\boundeds(\complexes)$, where $\complexes$ carries the trivial representation $1$. Following the ideas of \cite{perry2020semiring}, the strong converse error exponent can be characterized in terms of the asymptotic preorder: there exists a sequence of invariant measurements such that the type I error behaves like $1-2^{-Rn+o(n)}$ and the type II error is at most $2^{-rn+o(n)}$ iff $(\pi,\rho_0,\sigma_0)\asymptoticge(1,2^{-R},2^{-r})$. Theorem~\ref{thm:equivariantasymptotic} gives the necessary and sufficient condition
\begin{equation}
    \forall f\in\testspectrum(\familysemiring{G}{G}):f((\pi(g)\rho_0\pi(g)^*)_{g\in G},(\pi(g)\sigma_0\pi(g)^*)_{g\in G})\ge f(2^{-R},2^{-r})
\end{equation}
in the general case, while Corollary~\ref{cor:equivariantasymptoticcommuting} gives
\begin{equation}\label{eq:invariantstrongconverse}
    R\ge\sup_{\alpha>1}\max_\gamma\frac{\alpha-1}{\alpha}\left[r-\sandwicheddivergence{\alpha}{\rho_0}{\exp\int_G\ln\pi(g)\sigma_0\pi(g)^*\ed\gamma(g)}\right]
\end{equation}
when the orbit of $\sigma_0$ consists of operators commuting with $\sigma_0$.

\subsubsection{Reference frames in hypothesis testing}
When the dynamics of a system is constrained by symmetries, an additional supply of asymmetric states (imperfect reference frames) becomes a resource, which allows to partially overcome the limitations of symmetric evolutions \cite{bartlett2007reference}. Suppose that $\pi_{\textnormal{ref}}:G\to U(\mathcal{K})$ is a representation and $\Omega\in\states(\mathcal{K})$ is a state with full support and trivial stabilizer. In the setting of group-symmetric hypothesis testing as modeled above in terms of equivariant relative submajorization, the reference frame corresponds to the triple $(\pi_{\textnormal{ref}},\Omega,\Omega)$. Testing $(\pi,\rho,\sigma)$ aided by the reference frame corresponds to comparing $(\pi\otimes\pi_{\textnormal{ref}},\rho\otimes\Omega,\sigma\otimes\Omega)$ with a one-dimensional triple with trivial representation. In an asymptotic setting, more generally, we may use $\kappa$ copies of the reference frame per sample of the state to be discriminated. In this case the exponent pair $(R,r)$ is achievable iff
\begin{multline}
    \forall f\in\testspectrum(\familysemiring{G}{G}):f((\pi(g)\rho_0\pi(g)^*)_{g\in G},(\pi(g)\sigma_0\pi(g)^*)_{g\in G})  \\  \cdot  f((\pi_{\textnormal{ref}}(g)\Omega\pi_{\textnormal{ref}}(g)^*)_{g\in G},(\pi_{\textnormal{ref}}(g)\Omega\pi_{\textnormal{ref}}(g)^*)_{g\in G})^\kappa\ge f(2^{-R},2^{-r}).
\end{multline}
When the orbits of $\sigma_0$ and $\Omega$ consist of commuting operators, we can use Corollary~\ref{cor:equivariantasymptoticcommuting} to obtain an explicit form of the smallest type I strong converse exponent $R^*$ for a given decay rate $r$ of the type II error:
\begin{multline}
    R^*(r,\kappa)=\sup_{\alpha>1}\max_\gamma\frac{\alpha-1}{\alpha}\bigg[r-\sandwicheddivergence{\alpha}{\rho_0}{\exp\int_G\ln\pi(g)\sigma_0\pi(g)^*\ed\gamma(g)}  \\  -\kappa\sandwicheddivergence{\alpha}{\Omega}{\exp\int_G\ln\pi_{\textnormal{ref}}(g)\Omega\pi_{\textnormal{ref}}(g)^*\ed\gamma(g)}\bigg].
\end{multline}
As $\kappa\to\infty$ (ie. in the limit of unlimited supply of the reference frame), the supremum is achieved for the $\gamma$ that is concentrated on the identity element of the $G$ (by our assumption of $\Omega$ having trivial stabilizer), since this is the only point where the last term vanishes. This means that we recover the unrestricted strong converse exponent \cite{mosonyi2015quantum}, which is potentially much smaller than the group-symmetric one. In an extreme example, $\rho_0$ and $\sigma_0$ might be in the same $G$-orbit, in which case $R^*(r,0)=r$, ie. a group-symmetric measurement cannot offer any advantage over guessing.

\subsection{Approximate joint transformations}\label{sec:approximate}

In this section we specialize our results and derive a characterization of approximate joint transformations with respect to the max-divergence, in the asymptotic limit. Recall that the max-divergence between a pair of states $\rho,\sigma$ is $\maxrelativeentropy{\rho}{\sigma}=\log\norm[\infty]{\sigma^{-1/2}\rho\sigma^{-1/2}}=\min\setbuild{\lambda\in\reals}{2^\lambda\sigma\ge\rho}$. We will use the max-divergence as a measure of dissimilarity between states. It vanishes iff the two states are equal, but for subnormalized states this is no longer true, and it is not symmetric. However, the closely related quantity $\maxdistance(\rho,\sigma):=\max\{\maxrelativeentropy{\rho}{\sigma},\maxrelativeentropy{\sigma}{\rho}\}$ is a metric on the set of positive definite operators (the Thompson metric \cite{thompson1963certain} associated with the semidefinite cone). This metric is unbounded even on a fixed Hilbert space and satisfies $\maxdistance(\rho^{\otimes n},\sigma^{\otimes n})=n\maxdistance(\rho,\sigma)$. The notion of approximate transformations that we consider will be that the distance increases sublinearly as the number of copies grow.

This problem fits in our framework in the following way: we specialize to pairs of families on the same space $X=Y$, and compare elements of the form $(\rho,\rho)$, where $\rho\in C(X,\positivedefiniteoperators{\mathcal{H}})$. Relative submajorization between such pairs $(\rho,\rho)$ and $(\rho',\rho')$ means the existence of a completely positive trace-nonincreasing map $T$ such that $T(\rho)\ge\rho'$ and $T(\rho)\le\rho'$, ie. $T(\rho)=\rho'$.

Recall that the asymptotic preorder $\asymptoticge$ is defined by allowing a sublinear number of copies of the power universal element $u=(2\cdot 1_X,1_X)$. Therefore $(\rho,\rho)\asymptoticge(\rho',\rho')$ iff there is a sequence of completely positive trace-nonincreasing maps $T_n:\boundeds(\mathcal{H}^{\otimes n})\to\boundeds({\mathcal{H}'}^{\otimes n})$ such that for all $n$
\begin{equation}
    2^{-o(n)}{\rho'}^{\otimes n}\le T_n(\rho^{\otimes n})\le{\rho'}^{\otimes n},
\end{equation}
i.e. for all $x\in X$ we have
\begin{equation}
    \lim_{n\to\infty}\frac{1}{n}\maxdistance(T_n(\rho(x)^{\otimes n}),\rho'(x)^{\otimes n})=0,
\end{equation}
where the limit is uniform in $x$. The condition for this is that for all $f\in\testspectrum(\familysemiring{X}{Y})$ the inequality $f(\rho,\rho)\ge f(\rho',\rho')$ holds.

Specializing to classical families and using the explicit form of the 1-test spectrum of the classical semiring, we have the following characterization of asymptotic joint transformations in the above sense.
\begin{theorem}
Let $p:X\to\distributions([d])$, $p':X\to\distributions([d'])$ where $X$ is a compact Hausdorff space and $d,d'\in\positiveintegers$ are finite sets. The following are equivalent:
\begin{enumerate}
    \item there exists a sequence of substochastic maps $T_n$ from $(\reals^d)^{\otimes n}$ to $(\reals^{d'})^{\otimes n}$ such that for all $x\in X$
    \begin{equation}
        \lim_{n\to\infty}\frac{1}{n}\maxdistance(T_n(p(x)^{\otimes n}),p'(x)^{\otimes n})=0,
    \end{equation}
    uniformly in $x$;
    \item for all $x\in X$, $\alpha\ge 1$ and probability measure $\gamma$ on $X$ the inequality
    \begin{equation}
        \sum_{i=1}^dp_i(x)^\alpha\exp(1-\alpha)\int_X \ln p_i\ed\gamma\ge\sum_{i=1}^{d'}p'_i(x)^\alpha\exp(1-\alpha)\int_X \ln p'_i\ed\gamma
    \end{equation}
    holds.
\end{enumerate}
\end{theorem}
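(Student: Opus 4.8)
The plan is to recognize the claimed equivalence as the specialization of the general machinery of Sections~\ref{sec:preliminaries}--\ref{sec:classical} to the subsemiring $\cfamilysemiring{X}{X}$ of classical families (with $Y$ taken equal to $X$), applied to the two nonzero elements $(p,p)$ and $(p',p')$. Before doing so I would record that $\cfamilysemiring{X}{X}$ is itself a preordered semiring of polynomial growth with the same power universal $u=(2\cdot 1_X,1_X)$ and with $0\preorderle 1$: the maps used in the proof of polynomial growth (a scalar multiple of the trace, and tensoring with a maximally mixed state) send diagonal operators to diagonal operators, so that argument restricts to the classical subsemiring without change. Hence Corollary~\ref{cor:nonstrictasymptotic} and Theorem~\ref{thm:realspectrum} apply to $\cfamilysemiring{X}{X}$.

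The first step is to check that condition (i) is literally the statement $(p,p)\asymptoticge(p',p')$ in $\cfamilysemiring{X}{X}$; this rewriting was essentially carried out in the discussion preceding the theorem, so only some bookkeeping is needed. In the forward direction, uniform convergence of $\tfrac1n\maxdistance(T_n(p(x)^{\otimes n}),p'(x)^{\otimes n})$ to $0$ means that $\epsilon_n:=\sup_{x}\maxdistance(T_n(p(x)^{\otimes n}),p'(x)^{\otimes n})=o(n)$; unpacking the Thompson metric gives $2^{-\epsilon_n}p'(x)^{\otimes n}\le T_n(p(x)^{\otimes n})\le 2^{\epsilon_n}p'(x)^{\otimes n}$ for every $x$, and after rescaling $T_n$ by $2^{-\epsilon_n}\le 1$ (which keeps it substochastic) one obtains, with $k_n=\lceil 2\epsilon_n\rceil=o(n)$, that $u^{k_n}(p,p)^n\preorderge(p',p')^n$. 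Conversely, the asymptotic preorder supplies substochastic maps $T_n$ with $2^{-k_n}p'(x)^{\otimes n}\le T_n(p(x)^{\otimes n})\le p'(x)^{\otimes n}$ for a single sequence $k_n=o(n)$ not depending on $x$, which gives $\maxdistance(T_n(p(x)^{\otimes n}),p'(x)^{\otimes n})\le k_n$ uniformly.

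The second step is to invoke Corollary~\ref{cor:nonstrictasymptotic}: $(p,p)\asymptoticge(p',p')$ holds iff $f(p,p)\ge f(p',p')$ for every $f\in\testspectrum(\cfamilysemiring{X}{X})$. By Theorem~\ref{thm:realspectrum} (with $Y=X$) these $f$ are the real monotones $f_{\alpha,x,\gamma}$ and the tropical monotones $f_{x,\gamma}$, with $\alpha\ge 1$, $x\in X$ and $\gamma$ a probability measure on $X$. Evaluating $f_{\alpha,x,\gamma}$ on $(p,p)$ and $(p',p')$ — where now the second entry of the pair is $p$ itself, so the role of the $q_i$ is played by the $p_i$ — reproduces verbatim the inequalities of condition (ii). The last point is that the tropical inequalities $f_{x,\gamma}(p,p)\ge f_{x,\gamma}(p',p')$ are automatically implied by the real ones: by the identity recorded in the proof of Proposition~\ref{prop:satisfyDPI}, $f_{x,\gamma}=\lim_{\alpha\to\infty}f_{\alpha,x,\gamma}^{1/\alpha}$ pointwise, so taking $1/\alpha$-th powers (order-preserving) in $f_{\alpha,x,\gamma}(p,p)\ge f_{\alpha,x,\gamma}(p',p')$ and letting $\alpha\to\infty$ yields the tropical inequality. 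Chaining the equivalences gives (i)$\Leftrightarrow$(ii).

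Since essentially every ingredient is already in place, I expect the only genuine obstacle to be the bookkeeping in the first step — in particular making sure the uniformity in $x$ really yields one sublinear sequence $k_n$ and that the rescaling of $T_n$ is legitimate — together with the (routine) verification that $\cfamilysemiring{X}{X}$ satisfies the hypotheses required by Corollary~\ref{cor:nonstrictasymptotic} and Theorem~\ref{thm:realspectrum}.
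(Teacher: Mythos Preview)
Your proposal is correct and follows essentially the same route as the paper: the theorem is obtained by specializing the general equivalence of Corollary~\ref{cor:nonstrictasymptotic} to the classical subsemiring with $Y=X$ and to the diagonal elements $(p,p)$, $(p',p')$, and then reading off the spectral conditions from Theorem~\ref{thm:realspectrum}. The paper leaves these details implicit; you have spelled out the bookkeeping (uniformity in $x$, rescaling of $T_n$, and the observation that the tropical inequalities are limits of the real ones) correctly. One small point you pass over: the preorder on $\cfamilysemiring{X}{X}$ is inherited from $\familysemiring{X}{X}$ and hence a priori uses completely positive trace-nonincreasing maps rather than substochastic ones, but for classical pairs the two coincide since dephasing in the joint eigenbasis preserves the relevant inequalities.
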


We note that the ideas in this section can be combined with our considerations on equivariant transformations. If $X_0$ is a compact space, $G$ a compact group, $\pi:G\to U(\mathcal{H})$, $\pi':G\to U(\mathcal{H}')$ are unitary representations, $\rho_0\in C(X_0,\positivedefiniteoperators{\mathcal{H}})$ and $\rho'_0\in C(X_0,\positivedefiniteoperators{\mathcal{H}'})$, then we may ask whether an equivariant completely positive trace-nonincreasing map $T$ exists such that $T(\rho_0(x))=\rho'_0(x)$ for all $x\in X_0$. This can be encoded in the relative submajorization of families as follows. Let $X=X_0\times G$ and consider $\rho\in C(X,\positivedefiniteoperators{\mathcal{H}})$ defined as $\rho(x,g)=\pi(g)\rho_0(x)\pi(g)^*$, and similarly $\rho'$, which determine the elements $(\rho,\rho)$ and $(\rho',\rho')$ in $\familysemiring{X}{X}$. Then the existence of a suitable $T$ is equivalent to $(\rho,\rho)\preorderge(\rho',\rho')$. In the setting of Section~\ref{sec:thermal}, the spectrum gives rise to many ``second laws'' of thermodynamics in the sense of \cite{brandao2015second}, for example the value of
\begin{equation}\label{eq:thermalmonotone}
    \sandwicheddivergence{\alpha}{\rho}{(e^{-itH}\rho e^{itH})\#_\gamma e^{-\beta H}}
\end{equation}
cannot increase under a thermal process, for every $\alpha\ge 1$, $\gamma\in[0,1]$ and $t\in\reals$. Here the second argument may be replaced with any weighted geometric mean of the Gibbs state and arbitrary time-translated versions of $\rho$, and in addition the first argument may be replaced with the Gibbs state. To ensure that the quantity is finite, one generally needs to restrict to full-rank states $\rho$.

As a concrete example, consider a transition studied in \cite{faist2015gibbs}, perturbed slightly to get full-rank states. With the Hamiltonian $H=\ketbra{1}{1}$ on $\complexes^2$, the transition $\ketbra{1}{1}\to\ketbra{+}{+}$ was shown to be possible by Gibbs-preserving maps, but not possible with thermal processes. Let $\tau=e^{-\beta H}/\Tr e^{-\beta H}$ be the Gibbs state at temperature $\beta^{-1}$. Then the transition $(1-\epsilon)\ketbra{1}{1}+\epsilon\tau\to(1-\epsilon)\ketbra{+}{+}+\epsilon\tau$ is still possible by a Gibbs-preserving map. However, \eqref{eq:thermalmonotone} with $\alpha=2$, $\gamma=1/2$ and $t=\pi$ evaluates to $\frac{1}{2}\log(1+e^\beta)+O(\epsilon)$ on the initial state while it diverges logarithmically on the target state as $\epsilon\to 0$. This implies that, for sufficiently small $\epsilon$, the transition is not possible under a thermal process, even in the presence of a catalyst or assuming multi-copy transformations. In fact, a numerical comparison suggests that this holds for all $\epsilon\in(0,1)$.

\subsection{A two-parameter family of quantum R\'enyi divergences}\label{sec:divergences}

A defining property of the monotone quantities in $\testspectrum(\familysemiring{X}{Y})$ is that they are increasing in the first argument and decreasing in the second one. From the point of view of relative majorization, this is a severe and unnecessary restriction, and it is reasonable to expect that by dropping this requirement one gets more constraints on joint transformations. We now point out that it is possible to derive some of these additional constraints by specialization, thanks to the possibility of relative submajorization to express relative majorization as a special case. We also used this in Section~\ref{sec:approximate} for classical families, but now with a different viewpoint, we consider quantum pairs instead, and introduce a two-parameter family of monotone quantum R\'enyi divergences. We note that $\alpha$-$z$-divergences, another two-parameter quantum extension of the R\'enyi divergences introduced in \cite{audenaert2015alpha}, do not seem have any obvious relation to ours.

To this end, we let $X=Y=\{1,2\}$ and consider pairs of identical families (where the family means pair in this case). We change the notation to reflect the different point of view: in both families the first element will be denoted by $\rho$ and the second one $\sigma$ (so the pair of families may be written as $((\rho,\sigma),(\rho,\sigma))$). Recall that in the quantum case we only have a partial understanding of the spectrum, summarized in Theorem~\ref{thm:quantumspectrum} and Proposition~\ref{prop:meanconstructions}. It is known that if we start with only a pair of positive operators, then any iterated weighted geometric mean is equal to a single weighted geometric mean, so $M(\rho,\sigma)=\sigma\#_\gamma\rho=\rho\#_{1-\gamma}\sigma$ for some $\gamma\in[0,1]$ (there may be other functions satisfying the axioms in Definition~\ref{def:mean}, but these are the ones that one can construct via Proposition~\ref{prop:meanconstructions}). This implies that $\sandwichedquasientropy{\alpha_0}{\rho}{\sigma\#_\gamma\rho}$ and $\sandwichedquasientropy{\alpha_0}{\sigma}{\sigma\#_\gamma\rho}$ are monotone decreasing under joint application of completely positive and trace-\emph{preserving} maps, for all $\alpha_0\ge 1$ and $\gamma\in[0,1]$. On commuting pairs the first one reduces to the R\'enyi quasientropy of order $\alpha_0+\gamma(1-\alpha_0)$, which suggests introducing the following sandwiched-geometric R\'enyi divergences as a generalization of the sandwiched R\'enyi divergence:
\begin{equation}\label{eq:sandwichedmeandivergence}
    \begin{split}
        \sandwichedmeandivergence{\alpha,\gamma}{\rho}{\sigma}
         & := \frac{1}{1-\gamma}\sandwicheddivergence{\frac{\alpha-\gamma}{1-\gamma}}{\rho}{\sigma\#_\gamma\rho}  \\
         & = \frac{1}{\alpha-1}\log\Tr\left(\sqrt{\rho}\left(\sqrt{\sigma}\left(\sigma^{-1/2}\rho\sigma^{-1/2}\right)^\gamma\sqrt{\sigma}\right)^{\frac{1-\alpha}{\alpha-\gamma}}\sqrt{\rho}\right)^{\frac{\alpha-\gamma}{1-\gamma}}
    \end{split}
\end{equation}
By construction, for all $\alpha>1$ and $\gamma\in[0,1)$ this quantity is additive under the tensor product, satisfies the data processing inequality, and is a decreasing function of the second argument (but it is \emph{not} increasing in the first argument), and reduces to the R\'enyi divergence of order $\alpha$ on commuting arguments. When $\gamma=0$, \eqref{eq:sandwichedmeandivergence} agrees with the minimal R\'enyi divergence, whereas we do not know what the limit $\gamma\to 1$ is. We leave the detailed study of these divergences for future work.

\section*{Acknowledgement}

This work was supported by the \'UNKP-20-5 New National Excellence Program of the Ministry for Innovation and Technology and the J\'anos Bolyai Research Scholarship of the Hungarian Academy of Sciences. We acknowledge support from the Hungarian National Research, Development and Innovation Office (NKFIH) within the Quantum Technology National Excellence Program (Project Nr.~2017-1.2.1-NKP-2017-00001) and via the research grants K124152, KH129601.

\newcommand{\etalchar}[1]{$^{#1}$}
\end{document}